\tikzstyle{every picture}=[
\newcommand{\FBA}[1]{\mathcal{F}_{#1}}
\newcommand{\FRL}[1]{\mathcal{L}(#1)}
\newcommand{\LA}[1]{\mathrm{L}_{#1}}
\newcommand{\LsA}[2]{\mathrm{L}_{#1}(#2)}
\newcommand{\CQ}[1]{\Theta({#1})}
\newcommand{\dout}[2]{\delta^{\mathrm{out}}_{#1}(#2)}
\newcommand{\din}[2]{\delta^{\mathrm{in}}_{#1}(#2)}
\newcommand{\OrbAs}[2]{\mathrm{O}_{#1}(#2)}
\newcommand{\AutOrb}[2]{#1_{(#2)}}
\newcommand{\Lin}[2]{\mathrm{L}^{\mathrm{in}}_{#1}(#2)}
\newcommand{\Lout}[2]{\mathrm{L}^{\mathrm{out}}_{#1}(#2)}
\newcommand{\cset}[1]{\mathcal{S}_{#1}}
\newcommand{\AutCut}[2]{#1^{-#2}}
\newcommand{\Lscut}[3]{\mathrm{L}^{-#3}_{#1}(#2)}
\newcommand{\vset}[1]{\mathcal{V}_{#1}}
\newcommand{\AutAdd}[2]{#1^{+#2}}
\begin{document}

\bibliographystyle{splncs_srt}

\title{On the decidability of $k$-Block determinism}

\author{Pascal Caron\inst{1} \and Ludovic Mignot\inst{2} \and Clément Miklarz\inst{1}}

\institute{
    Laboratoire LITIS - EA 4108. Universit\'{e} de Rouen-Normandie, Avenue de l'Universit\'{e} 76801 Saint-\'{E}tienne-du-Rouvray Cedex, France
\and
    Universit\'{e} de Rouen-Normandie, Avenue de l'Universit\'{e} 76801 Saint-\'{E}tienne-du-Rouvray Cedex, France\\
    \email{\{pascal.caron,ludovic.mignot,clement.miklarz1\}@univ-rouen.fr}
}

\maketitle

\begin{abstract}
 	Brüggemann-Klein and Wood define a one-unambi\-guous regular language as a language that can be recognized by a deterministic Glushkov automaton.
 	They give a procedure performed on the minimal DFA, the $\mathrm{BW}$-test, to decide whether a language is one-unambiguous.
	Block determinism is an extension of one-unambi\-guity while considering non-empty words as symbols and prefix-freeness as determinism.
	A block automaton is compact if it does not have two equivalent states (same right language).
	We showed that a language is $k$-block deterministic if it is recognized by some deterministic $k$-block automaton passing the $\mathrm{BW}$-test.
	In this paper, we show that any $k$-block deterministic language is recognized by a compact deterministic $k$-block automaton passing the $\mathrm{BW}$-test.
	We also give a procedure which enumerates, for a given language, the finite set of compact deterministic $k$-block automata. 
	It gives us a decidable procedure to test whether a language is $k$-block deterministic.
\end{abstract}

\section*{Introduction}

    Deterministic or one-unambiguous regular expressions are defined by Brüggemann-Klein and Wood~\cite{BW98} as expressions having a deterministic Glushkov automaton.
    This research has been motivated by the formalization of expressions in Document Type Definition of SGML: one-unambiguity ensures an efficient parsing of these documents.
	The authors characterize languages that can be denoted by such expressions, and show that these languages are strictly included into regular ones.
	Finally, they provide a decidable procedure, the \textrm{BW}-test, to determine whether a given language is one-unambiguous.
	
    Giammaresi {\it et al.}~\cite{GMW01} mention two possible extensions of the notion of one-unambiguity.
    One of them, the block determinism, is linked to block automata defined by Eilenberg~\cite{Eil74} where labels of edges are words.
    In block automata, the notion of determinism is slightly modified since two transitions that start from a same state should not have labels such that one is prefix of the other.
    The authors define a block deterministic language as a language that can be recognized by a block deterministic Glushkov automaton.
    Furthermore, block deterministic languages strictly include one-unambiguous ones.

	Giammarresi{\it et al.} had characterized them using a state elimination procedure on the minimal DFA, but we showed~\cite{CMM16} that one of their lemma is not correct.
	It allowed us to say that state elimination on the minimal DFA is not enough to decide whether a language is $k$-block deterministic.
	Nonetheless, we extended the \textrm{BW}-test on deterministic block automata to state a sufficient condition for a language to be $k$-block deterministic.
	Moreover, we gave a valid proof of the existence of a proper hierarchy in block deterministic languages.
	
	In this paper, we state a necessary and sufficient condition for a language to be $k$-block deterministic.
	Since the minimal deterministic block automata recognizing a language $L$ are not isomorphic, the \textrm{BW}-test cannot be applied on a canonical automaton.
	However, we consider deterministic block automata with no equivalent state, defined as compact automata.
	Then, we show how to compute every compact deterministic block automata recognizing a given regular language.
	Finally, we show that a $k$-block deterministic language is recognized by a compact deterministic $k$-block automaton passing the \textrm{BW}-test.
	Thus, a language is $k$-block deterministic if and only if one of these automata passes the \textrm{BW}-test, and this gives us a decidable procedure to determine whether a language is $k$-block deterministic.

    The paper is organized as follows.
	Section~\ref{se:pre} gathers some preliminary results and notations about languages and automata.
	In Section~\ref{se:bda}, we define block deterministic automata and languages, and characterize them using the \textrm{BW}-test.	
	Section~\ref{sect:compact} is devoted to the computation of the set of compact deterministic $k$-block automata recognizing a given language.
	Finally, we show in Section~\ref{se:compaction} how to compact a deterministic $k$-block automaton while preserving the \textrm{BW}-test.
	This gives us a procedure to decide whether a language is $k$-block deterministic.

\section{Preliminaries}\label{se:pre}

    Let $\Sigma$ be a finite alphabet and $\Sigma^*$ be the set of words over $\Sigma$.
	A \emph{language over $\Sigma$} is a subset of $\Sigma^*$.
    The \emph{length} of a word $w$ is the number $|w|$ of occurrences of symbols of $\Sigma$ appearing in $w$.
    The \emph{empty word} is denoted by $\varepsilon$.
    The set of all prefixes of $w$ is denoted by $\mathrm{Pref}(w)$.
	A language $L$ is \emph{prefix-free} if for every two words $w_1$ and $w_2$ in $L$, $w_1$ is not a prefix of $w_2$.
	Usual operations on sets, like $\cup$, $\cap$, $\setminus$ (set difference) are also defined on languages.
	Let $L$ and $L^{\prime}$ be two languages over $\Sigma$. 
	The \emph{concatenation} $L \cdot L^{\prime}$ is the set $\{w \cdot w^{\prime} \mid w \in L \wedge w^{\prime} \in L^{\prime}\}$ and the \emph{Kleene star} $L^*$ is the set $\bigcup_{k \in \mathbb{N}} L^k$ with $L^0 = \{\varepsilon\}$ and $L^{k+1} = L \cdot L^k$.

	A \emph{regular expression over $\Sigma$} is inductively built from $\emptyset$, $\varepsilon$, and symbols in $\Sigma$ using the binary operators $+$ and $\cdot$, and the unary operator $^*$.
    The \emph{language} $\mathrm{L}(E)$ \emph{denoted} by a regular expression $E$ is inductively defined as follows:
\begin{align*}
	\mathrm{L}(\emptyset) &= \emptyset, & \mathrm{L}(\varepsilon) &= \{\varepsilon\}, & \mathrm{L}(a) &= \{a\},\\
	\mathrm{L}(F + G) &= \mathrm{L}(F) \cup \mathrm{L}(G),\  & \mathrm{L}(F \cdot G) &= \mathrm{L}(F) \cdot \mathrm{L}(G),\  & \mathrm{L}(F^*) &= \mathrm{L}(F)^*,
\end{align*}
with $a \in \Sigma$, and $F$, $G$ some regular expressions over $\Sigma$.	
	The set of \emph{regular languages} is exactly the set of languages that can be denoted by a regular expression.
	A regular expression is \emph{trim} if it is $\emptyset$ or does not contain $\emptyset$.
	We consider only trim regular expressions in the following of this paper.

	An \emph{automaton} $A$ is a 5-tuple $(\Sigma, Q, I, F, \delta)$ defined by $Q$ a finite set of states, $I \subset Q$ the set of initial states, $F \subset Q$ the set of final states, and $\delta \subset Q \times \Sigma \times Q$ the set of transitions.
	The sets defining $A$ are implicitly denoted by $\Sigma_A$, $Q_A$, $I_A$, $F_A$ and $\delta_A$.
	The set $\delta$ is equivalent to a function in $Q \times \Sigma \rightarrow 2^Q$ defined by $(p, a, q) \in \delta \Longleftrightarrow q \in \mathrm{\delta}(p, a)$. 
	This function can be extended to $2^Q \times \Sigma \rightarrow 2^Q$ by $\delta(Q^{\prime}, a) = \bigcup_{q \in Q^{\prime}} \delta(q, a)$.
	The \emph{transitive closure} $\delta^*$ of $\delta$ is the subset $\bigcup_{k \in \mathbb{N}} \delta^k$ of $Q \times \Sigma^* \times Q$ with $\delta^0 = \{(p, \varepsilon, p) \mid p \in Q\}$ and $\delta^{k+1} = \{(p, au, q) \mid (p, a, r) \in \delta \wedge (r, u, q) \in \delta^k\}$.
	The \emph{set of final labels} of $A$ is the set of labels of the transitions going out of the final states of $A$, denoted by $\FBA{A} = \{a \mid \delta(F, a) \neq \emptyset\}$.
	The \emph{right language} $L(q)$ of a state $q$ in $Q$ is the set $\{w \in \Sigma^* \mid \delta^*(q, w) \cap F \neq \emptyset\}$.
	Two states are equivalent if they have the same right language.
	The \emph{family of right languages} of $A$ is denoted by $\FRL{A} = \{\LsA{}{q} \mid q \in Q\}$.
	The \emph{language recognized by $A$} is the set $L_A = \{w \in \Sigma^* \mid \delta^*(I,w) \cap F \neq \emptyset\}$.
	Two automata are \emph{equivalent} if they recognize the same language.	
	Kleene's Theorem~\cite{Kle56} asserts that the set of regular languages is the same as the set of languages recognized by finite automata.
	Several algorithms to compute an automaton from a regular expression have been given, such as the Glushkov one~\cite{Glu61}.		
	Let $p$ and $q$ be two states in $Q$.
	Then $q$ is \emph{reachable from} $p$ if there exists a path of transitions $(p, w, q)$ in $\delta^*$.
	The state $p$ is \emph{accessible} (resp. \emph{co-accessible}) in $A$ if there exists a state $i$ in $I$ (resp. $f$ in $F$) such that $p$ is reachable from $i$ (resp. $f$ is reachable from $p$).
    The automaton $A$ is \emph{trim} if all its states are both accessible and co-accessible.
    The automaton $A$ is \emph{deterministic} (and called a DFA) if $|I| = 1$ and for any two distinct transitions $(p, a, q_1)$ and $(p, b, q_2)$ in $\delta_A$, $a$ is different from $b$.
    A DFA is \emph{minimal} if there is no equivalent DFA with fewer states.       
    Two equivalent minimal DFA are isomorphic and consequently, there exists a canonical DFA recognizing a regular language $L$ called the minimal DFA of $L$.
	The minimal DFA of a language can be computed from a trim DFA by merging equivalent states, and thus its states are pairwise not equivalent.
    The automaton $A$ is \emph{residual} if its family of right languages is included in the one of its minimal DFA.	 

  	A non-empty subset of states of an automaton is an \emph{orbit} if it is a strongly connected component.   
    The \emph{orbit of a state $q$}, denoted by $\OrbAs{}{q}$, is the strongly connected component to which $q$ belongs.  
    An orbit is \emph{trivial} if it consists of only one state with no self-loop.
    A \emph{non re-entering component} $R$ is a union of orbits such that for any state $r$ in $R$ and $q$ not in $R$, if $r$ can reach $q$ then $q$ cannot reach $r$.
    
	The set of out-transitions (resp. in-transitions) of $R$ is denoted by $\dout{}{R} = \{(r, a, p) \in \delta \mid r \in R \wedge p \notin R\}$ (resp. $\din{}{R} = \{(p, a, r) \in \delta \mid p \notin R \wedge r \in R\}$).
	The internal transitions of $R$ are the transitions linking any two states of $R$.
	A state of $R$ is a \emph{gate} if it is either final or the origin of an out-transition.
  
	The notions of reachability, accessibility and co-accessibility is extended to orbits in relation to their states.

\section{Block determinisitic languages}\label{se:bda}

    A \emph{block} is a non-empty word over $\Sigma$.
	The \emph{set of blocks of $\Sigma$} is denoted by $\Gamma_{\Sigma}$.	
	The notions of regular expression and automaton over an alphabet $\Sigma$ can be extended to block regular expressions and block automaton~\cite{Eil74,GM99} replacing symbols of $\Sigma$ by blocks of $\Gamma_{\Sigma}$.
    Notice that the transitive closure $\delta^*$ is still a subset of $(Q \times \Sigma^*\times Q)$.
    Since $\Gamma_{\Sigma}$ is a subset of $\Sigma^*$, the distinction between $\delta$ and $\delta^*$ is needed to differentiate a simple transition from a path of transitions.
	
    To distinguish blocks as syntactic components in a block regular expression, they are written between square brackets.
	Those are omitted for one letter blocks.
	Moreover, blocks can be treated as single symbols, as we do when we refer to the elements of an alphabet.
    This allows us to construct the Glushkov (block) automaton of a block regular expression.

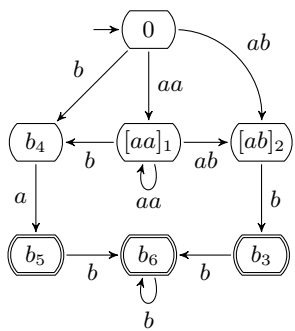
\begin{figure}
	\centering	
	\begin{tikzpicture}
		\node[state, initial] (i) {$0$};
	    \node[state, below of=i] (1) {$[aa]_1$};
	    \node[state, left of=1] (4) {$b_4$};
	    \node[state, right of=1] (2) {$[ab]_2$};
	    \node[state, accepting, below of=2] (3) {$b_3$};
	    \node[state, accepting, below of=4] (5) {$b_5$};
  	  	\node[state, accepting, below of=1] (6) {$b_6$};
	    \path[->]
    	        (i) edge node {$aa$} (1)
            	(i) edge [swap] node {$b$} (4)
       		(i) edge [bend left=45] node {$ab$} (2)
            (1) edge [loop below] node {$aa$} ()
    	        (1) edge node {$b$} (4)
   		    (1) edge [swap] node {$ab$} (2)
   		    (2) edge node {$b$} (3)
  		    (3) edge node {$b$} (6)
   		    (4) edge [swap] node {$a$} (5)
   		    (5) edge [swap] node {$b$} (6)
   		    (6) edge [loop below] node {$b$} ()
		;
	\end{tikzpicture}
	\caption{The Glushkov automaton of $E = [aa]^*([ab]b + ba)b^*$}
	\label{fg:GlushkovBloc}
\end{figure}

    We denote by $\Gamma_E$ (resp. $\Gamma_B)$ the set of blocks appearing in $E$ (resp. labelling transitions of $B$).
	Let $E$ be a block regular expression and $A$ be a block automaton such that $\Gamma_E = \Gamma_A = \Gamma$, then $E$ and $A$ are \emph{$k$-block} for some integer $k$ if the length of any block $b$ in $\Gamma$ is smaller than or equal to $k$.
	Since $\Sigma \subset \Gamma_{\Sigma}$, regular expressions (resp. automata) are $1$-block regular expressions (resp. $1$-block automata).
	Thus, we refer to regular expressions and automata as being block regular expressions and block automata.

	The notion of determinism is extended to block automata as follows:
\begin{definition}	
	A block automaton $A$ is \emph{deterministic} if $|I| = 1$ and for any two distinct transitions $(p, b_1, q_1)$ and $(p, b_2, q_2)$ in $\delta$, $b_1$ is not a prefix of $b_2$.
\end{definition}

    A regular expression E is \emph{one-unambi\-guous}~\cite{BW98} if its Glushkov automaton is a DFA, and a language is \emph{one-unambiguous} if it can be denoted by a one-unambiguous regular expression.
	Block deterministic languages~\cite{GMW01} are an extension of one-unambi\-guous languages: a block regular expression $E$ is \emph{deterministic} if its Glush\-kov automaton is deterministic, and a language is \emph{$k$-block deterministic} if it can be denoted by a deterministic $k$-block regular expression.
	As an example, the Glushkov automaton in Figure~\ref{fg:GlushkovBloc} is deterministic and thus, the language $\mathrm{L}([aa]^*([ab]b + ba)b^*)$ is $2$-block deterministic.
	The family of one-unambiguous languages is the same as the family of $1$-block deterministic languages, and we showed~\cite{CMM16} that there is a proper infinite hierarchy in $k$-block deterministic languages.
	
	The one-unambiguity of a language is structurally decidable over its minimal DFA.
	The decision procedure is related to the orbits of its underlying graph and to their links with the remaining parts.
		
	A non re-entering component $R$ of an automaton $A$ is \emph{transverse} if for any two gates $p$ and $q$ of $R$, for any symbol $a$, for any state $r$ of $A$, $p$ is final if $q$ is, and $(p,a,r)$ is an out-transition if $(q,a,r)$ is.
	The automaton $A$ has the \emph{orbit property} if all its orbits are transverse.

	The \emph{orbit automaton $\AutOrb{A}{q}$ of the state $q$} is the automaton obtained by restricting the states of $A$ to $\OrbAs{}{q}$ and the transitions of $A$ to the internal transitions of the orbit $\OrbAs{}{q}$ while considering $q$ as the initial state and  the gates of $\OrbAs{}{q}$ as the final states.
	For any state $q$ of $A$, the language $\LA{\AutOrb{A}{q}}$ is an \emph{orbit language of $A$}.
	
	A label $a$ of $\Gamma_A$ is \emph{$A$-consistent} if there exists a state $q_a$ of $A$ such that every final state of $A$ has an outgoing transition labelled by $a$ to $q_a$, and those outgoing transitions are called \emph{synchronizing transitions}.
	The \emph{set of consistency of $A$} is denoted by $\cset{A} = \{(a, q_a) \mid \forall f \in F, (f, a, q_a) \in \delta\}$.
	This definition is trivially extended to an orbit as the set of consistency of any of its orbital automaton.
	Let $s = (a, q_a)$ be an element of $\cset{A}$.
	The \emph{$s$-cut} $\AutCut{A}{s}$ of $A$ is constructed from $A$ by removing for each final state $f$ of $A$, every transition $(f, a, q_a)$ of $A$.
	It is naturally extended to a subset of $\cset{A}$.

	Brüggemann-Klein and Wood define an inductive algorithm and give a characterization of the one-unambiguous languages.

%

    However, we use a slightly different but equivalent test, which we name the \emph{\textrm{BW}-test}. 

\begin{theorem}[\cite{BW98}]\label{th:BWtest}
    Let $M$ be a minimal DFA.
    Then, $\LA{M}$ is one-unambiguous if and only if $M$ has the orbit property and all orbit languages of $M$ are one-unambiguous.
    If $\LA{M}$ is one-unambiguous, then a one-unambiguous regular expression denoting $\LA{M}$ can be constructed from one-unambiguous regular expressions for the orbit languages.
\end{theorem}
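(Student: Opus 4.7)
The plan is to prove both implications by induction on the orbit structure of $M$, namely on the acyclic DAG obtained by contracting each orbit of $M$ to a single vertex. The construction part of the statement will be obtained by making the inductive step effective.

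For the $(\Leftarrow)$ direction, I would assume $M$ has the orbit property and every orbit language is one-unambiguous, and induct on the number of non-trivial orbits. The base case is when all orbits are trivial, so $M$ is acyclic, $\LA{M}$ is finite, and any prefix-tree expression denoting it is one-unambiguous. For the inductive step, I would pick a non-trivial orbit $O$ and a state $q \in O$; by hypothesis the orbit language $\LA{\AutOrb{M}{q}}$ is denoted by some one-unambiguous $E_O$. By the orbit property, every gate of $O$ has the same finality and the same set of outgoing transitions, which are exactly described by $\cset{O}$. I would then consider the cut $\AutCut{M}{\cset{O}}$, which strictly reduces the orbit DAG, apply the induction hypothesis to produce a one-unambiguous expression for the cut, and finally recombine it with $E_O$ following the synchronizing pattern given by $\cset{O}$ to obtain a one-unambiguous expression denoting $\LA{M}$.

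For the $(\Rightarrow)$ direction, I would start from a one-unambiguous expression $E$ with deterministic Glushkov automaton $G$ such that $\LA{G} = \LA{M}$. Since $M$ is the minimal DFA of $\LA{M}$, there is a surjective automaton homomorphism $\phi \colon G \to M$ merging states with equal right languages. I would first verify that $G$ itself satisfies the orbit property (this is a structural fact about Glushkov automata coming from the parsing tree of $E$: its orbits correspond to starred subexpressions), then show that $\phi$ descends the orbit property to $M$. The key point is that determinism of $G$ together with preservation of right languages by $\phi$ forces the images of gates to share finality and outgoing structure in $M$. For each orbit of $M$, pulling back along $\phi$ yields a starred subexpression of $E$ whose Glushkov automaton is itself deterministic, witnessing that the corresponding orbit language is one-unambiguous.

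The main obstacle is the explicit recombination step in the $(\Leftarrow)$ direction, since neither union nor concatenation preserves one-unambiguity in general: two subexpressions whose first-letter sets intersect produce a non-deterministic Glushkov automaton. The orbit property is exactly what rules this out, because the synchronizing transitions at every gate of $O$ agree and the first letters of internal transitions of $O$ cannot collide with the first letters of synchronizing transitions, lest $M$ itself fail to be deterministic. Verifying that the combination of $E_O$ with the inductive expression for $\AutCut{M}{\cset{O}}$ keeps the first-letter and follow sets disjoint where required, so that the resulting Glushkov automaton is still a DFA, is the delicate part and is where minimality of $M$ and the orbit property are both used in an essential way.
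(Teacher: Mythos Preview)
This theorem is not proved in the present paper: it is quoted from \cite{BW98} and used as a black box (the paper only restates it and then describes the derived \textrm{BW}-test). There is therefore no proof here to compare your proposal against.

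That said, your sketch has two genuine gaps worth flagging. In the $(\Leftarrow)$ direction, you write ``consider the cut $\AutCut{M}{\cset{O}}$, which strictly reduces the orbit DAG, apply the induction hypothesis\ldots''. But $\cset{O}$ is the consistency set of an \emph{orbital} automaton, and your induction hypothesis is phrased for \emph{minimal} DFAs; after cutting, the resulting automaton need not be minimal, and you have not argued that it still has the orbit property or that its (new, smaller) orbit languages are one-unambiguous. The hypothesis you are given speaks only of the orbit languages of $M$ itself, so the recursion has to be organised differently --- in Br\"uggemann-Klein and Wood's argument one builds the expression along the orbit DAG using the decomposition $\LsA{}{q}=\Lin{O}{q}\cdot\Lout{}{O}$ (as in Lemma~\ref{lm:lso}), rather than by cutting $M$ globally.

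In the $(\Rightarrow)$ direction, the step ``$\phi$ descends the orbit property to $M$'' is the crux and is not justified. An orbit of $M$ can be the $\phi$-image of several distinct orbits of $G$, and two gates of that orbit in $M$ may come from gates of \emph{different} orbits of $G$; nothing in your argument forces their out-transitions in $M$ to coincide. Likewise, ``pulling back along $\phi$ yields a starred subexpression of $E$'' is too optimistic: orbits of $M$ need not correspond to single starred subexpressions. The original proof of this direction is considerably more delicate than a homomorphism-pushforward, and your outline does not yet supply the missing mechanism.
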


	Thus, let $A$ be an automaton, the \textrm{BW}-test is performed as follows:
\begin{enumerate}
	\item if $A$ does not have the orbit property, the test halts and fails
	\item for each non trivial orbit of $A$:
		\begin{enumerate}
		    \item choose one of its orbital automaton $B$
			\item if $\cset{B}$ is empty, the test halts and fails
			\item choose $(b, s)$ in $\cset{B}$
			\item recursively test $\AutCut{(\AutOrb{B}{s})}{(b, s)}$
		\end{enumerate}
	\item the test succeeds
\end{enumerate}

    Notice that this test always terminates.
    Indeed, if it never fails, by selecting orbital automata and removing their transitions with a consistent label going out of their final states, we necessarily get acyclic automata at some point (which contain only trivial orbits).
    Moreover, removing useless states preserves the \textrm{BW}-test.
    
    Let $A$ be a block automaton.
	The automaton $B$ is \emph{an alphabetic image of} $A$ if $Q_B=Q_A$, $I_B=I_A$, $F_B=F_A$ and there exists an injection $\phi$ from $\Gamma_A$ to $\Gamma_B$ such that $\delta_B = \{(p, \phi(a), q) \mid (p, a, q) \in \delta_A\}$.
	
\begin{theorem}[\cite{CMM16}]\label{th:KBD}
	A language is $k$-block deterministic if and only if it is recognized by a deterministic $k$-block automaton $B$ such that $B$ is the alphabetic image of a DFA passing the \textrm{BW}-test.
\end{theorem}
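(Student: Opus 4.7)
The plan is to prove both directions by passing between block objects over $\Gamma$ and their flattened counterparts over a genuine symbol alphabet via the injection $\phi$, and then invoking Theorem~\ref{th:BWtest} to bridge one-unambiguity and the BW-test on the symbol side.

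For the $(\Rightarrow)$ direction, given a deterministic $k$-block expression $E$ denoting $L$, I would rename each block occurrence in $E$ by a fresh symbol, producing an ordinary regular expression $E'$ over a symbol alphabet $\Gamma_A$ together with an injection $\phi : \Gamma_A \to \Gamma_E$. The block-determinism of $E$ forces $E'$ to be one-unambiguous, so by Theorem~\ref{th:BWtest} the minimal DFA $A$ of $\mathrm{L}(E')$ passes the BW-test. Taking $B := \phi(A)$ then yields a $k$-block automaton recognizing $L$; determinism of $B$ follows because outgoing labels at each state of $A$ correspond via $\phi$ to the outgoing blocks at a Glushkov position of $G_E$, which are prefix-free by assumption, and because the outgoing-label set at each state is preserved by the state-merging that produces $A$ from $G_{E'}$.

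For the $(\Leftarrow)$ direction, I would take $B = \phi(A)$ with $A$ a DFA passing the BW-test and apply the constructive side of Theorem~\ref{th:BWtest} to extract a one-unambiguous expression $E'$ over $\Gamma_A$ with $\mathrm{L}(E') = \LA{A}$. Substituting the block $\phi(a)$ for each symbol $a$ of $E'$ produces a $k$-block expression $E$ with $\mathrm{L}(E) = \phi(\LA{A}) = \LA{B} = L$, and it remains to verify that $E$ is block-deterministic. Since $G_{E'}$ is an ordinary DFA and $\phi$ is injective, the outgoing labels at each position of $G_E$ are already pairwise distinct; prefix-freeness would be transported from $B$ to $G_E$ via a natural map from positions of $G_{E'}$ to states of $A$ (well-defined by a right-language argument) under which the outgoing labels of a position map precisely onto the outgoing labels of the corresponding $A$-state.

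The main obstacle is this last alignment between Glushkov positions and DFA states, since positions carry more local information than states unless $A$ is itself the Glushkov of $E'$. I would handle it by an induction along the recursive structure of the BW-test: at the base (trivial orbits) the correspondence is immediate, and at each $s$-cut step the decomposition of the orbital automaton produces outgoing blocks that are exactly the outgoing blocks at the corresponding state of $B$, so prefix-freeness of $B$ propagates to $G_E$ throughout the construction.
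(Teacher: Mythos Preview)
The paper does not prove Theorem~\ref{th:KBD}; it is quoted from the authors' earlier paper~\cite{CMM16} and stated here without argument, so there is nothing in the present text to compare your proposal against.

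Your outline is nonetheless the natural one and is presumably close in spirit to what~\cite{CMM16} does. Two points would need tightening in the backward direction. First, the map from positions of $G_{E'}$ to states of $A$ is better defined directly from the recursive BW-construction (each position of the expression produced at a given stage arises from a specific state of the automaton being processed) rather than ``by a right-language argument'': the right-language map lands in the \emph{minimal} DFA of $\LA{A}$, which need not be $A$ itself, whereas prefix-freeness is only given as a hypothesis on $B=\phi(A)$. Second, you do not need the outgoing labels at a Glushkov position to map \emph{onto} those of the corresponding $A$-state; containment suffices, since any subset of a prefix-free set is prefix-free, and containment is what the inductive construction actually delivers.
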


	As the \textrm{BW}-test is purely structural, it can be directly applied to block automata while considering blocks as symbols.
	Thus, this theorem allows us to compute a deterministic $k$-block regular expression denoting the language recognized by a deterministic $k$-block automaton passing the \textrm{BW}-test.

	Notice that, if the test fails, we cannot decide whether the language is $k$-block deterministic or not.
	Moreover, Giammarresi and Montalbano show~\cite{GM99} that, in general, there is not a unique minimal deterministic block automaton recognizing a language.
    Thus, unlike DFA, minimality in deterministic block automata is not sufficient. 
    In this paper, we focus on a more general concept of block automata which includes the case of minimal DFA.
    
\begin{definition}
    A block automaton $A$ is \emph{compact} if it has no two equivalent states.
\end{definition}
    
    Similarly, an orbit of a block automaton is compact if it does not contain two equivalent states.
        
    First, we prove that, given a language $L$ and an integer $k$, there is a finite number of trim compact deterministic $k$-block automata recognizing $L$ which can be computed from its minimal DFA.
    Then, we describe a procedure to compute a compact block automaton from a deterministic $k$-block automaton while preserving the \textrm{BW}-test.

\section{Computation of compact deterministic block automata}\label{sect:compact}

    In this section, we first give some properties related to deterministic block automata which are necessary to define the $k$-transition automaton. 
    Then we prove that given a language $L$, any compact deterministic block automaton recognizing $L$ is a sub-automaton of the $k$-transition automaton which can be computed from its minimal DFA.

\begin{lemma}\label{lm:state_kbd}
    Let $A$ be a trim deterministic block automaton and $M$ be the minimal DFA of $L_A$.
    For any state $p$ of $A$, there exists a state $q$ of $M$ such that $\LsA{}{p} = \LsA{}{q}$.
\end{lemma}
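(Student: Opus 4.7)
The plan is to show that the right language $\LsA{}{p}$ of any state $p$ of $A$ is a left quotient of the recognized language $L_A$; since the states of the minimal DFA $M$ are in bijection with the non-empty left quotients of $L_A$, this immediately yields the required state of $M$. A key preparatory ingredient is the following path-uniqueness property in deterministic block automata: for every state $q$ and every word $w \in \Sigma^*$, there is at most one path from $q$ whose labels concatenate to $w$. I would prove this by induction on $|w|$, since any such path must begin with a transition $(q, b, q')$ where $b$ is a non-empty prefix of $w$, and prefix-freeness of out-transitions at $q$ forces $b$, and hence by determinism also $q'$, to be uniquely determined; the induction hypothesis then handles the remaining suffix starting at $q'$.

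Since $A$ is trim, $p$ is accessible, so I would pick a word $u \in \Sigma^*$ with $(i, u, p) \in \delta_A^*$ where $i$ is the unique initial state of $A$, and aim to prove $\LsA{}{p} = u^{-1} L_A$. The inclusion $\LsA{}{p} \subseteq u^{-1} L_A$ is immediate by concatenating the path $(i, u, p)$ with any accepting path starting at $p$. For the reverse inclusion, given $w$ with $uw \in L_A$, there is a path $(i, uw, f)$ with $f \in F_A$, and one must show that it factors through $p$ after reading $u$. I would argue by induction on $|u|$: the first transitions of the paths $(i, u, p)$ and $(i, uw, f)$ both start at $i$, and their labels are prefixes of $u$ and of $uw$ respectively, hence both are prefixes of $uw$; since any two prefixes of the same word are comparable for the prefix order, prefix-freeness at $i$ forces these two transitions to coincide. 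Stripping them off produces two paths of strictly smaller $u$-length, to which the induction hypothesis applies. The delicate point here is the potential misalignment of block boundaries between the two paths, which is precisely what the prefix-freeness argument dispels.

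Finally, $A$ being trim also ensures that $p$ is co-accessible, so $\LsA{}{p}$ is non-empty and hence so is $u^{-1} L_A$. By the Myhill--Nerode characterisation, the states of the minimal DFA $M$ are in bijection with the non-empty left quotients of $L_A$, so there is a state $q$ of $M$ with $\LsA{}{q} = u^{-1} L_A = \LsA{}{p}$, as required.
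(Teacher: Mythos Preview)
Your proof is correct and follows essentially the same approach as the paper: pick a word $u$ reaching $p$ and show that $\LsA{}{p} = u^{-1}L_A$ by using prefix-freeness of outgoing transitions to rule out any accepting path for $uw$ that fails to pass through $p$. The paper phrases the conclusion by directly comparing with the state $q = \delta_M^*(I_M,u)$ of $M$ rather than invoking Myhill--Nerode, and compresses your inductive prefix-comparison argument into a single sentence, but the underlying reasoning is identical.
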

\begin{proof}
    Let $p$ be a state of $A$.
    There exists two words $w_p$ and $w_s$ in $\Sigma_A^*$, such that $p$ belongs to $\delta_A^*(I_A, w_p)$ and $\delta_A^*(p, w_s) \cap F_A \neq \emptyset$.
    Thus, $w_pw_s$ belongs to $\LA{A}$.
    If $\delta_M^*(I_M, w_p) = \emptyset$, then $w_pw_s$ does not belong to $\LA{M}$ which is contradictory.
    Therefore, there exists a state $q$ in $\delta_M^*(I_M, w_p)$.
    Let $w_1$ be a word in $\LsA{A}{p}$, then $w_pw_1$ belongs to $\LA{A} = \LA{M}$ and $w_1$ belongs to $\LsA{M}{q}$.
    Let $w_2$ be a word not in $\LsA{A}{p}$.
    Let us suppose that $w_pw_2$ belongs to $\LA{A}$.
    Since there is only one initial state, there would exist a state $s$ in $Q_A$ with two out-going transitions which are prefix from each other, contradicting the determinism.
    Thus, $w_pw_2$ does not belong to $\LA{A} = \LA{M}$ and $w_2$ does not belong to $\LsA{M}{q}$.
    Therefore, $p$ and $q$ are equivalent.
\end{proof}

    Thus, we can define the function $\Phi$ which associates to each state in $A$ its equivalent state in $M$. 
    It is naturally extended to set of states.
    
    Let us notice that $\Phi$ is not necessarily surjective.
    Moreover, unlike two equivalent DFA, two equivalent deterministic block automata may have different families of right languages.
    Thus, two block automata are \emph{$\mathcal{L}$-equivalent} if they are equivalent and have the same family of right languages.

    Since for any state of $A$, at most one state is reached for any word, it follows that:
 
\begin{corollary}\label{coro:state_kbd}
    Let $M$ be the minimal DFA of a trim deterministic block automaton $A$.
    If $(p, w, q)$ belongs to $\delta_A^*$, then $(\Phi(p), w, \Phi(q))$ belongs to $\delta_M^*$.
\end{corollary}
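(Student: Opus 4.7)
The plan is to strengthen the observation behind Lemma~\ref{lm:state_kbd} into a concrete formula for $\Phi$, and then read off the conclusion from the determinism of $M$. The heart of the argument is the identification $\Phi(p) = \delta_M^*(I_M, w_p)$ for any word $w_p \in \Sigma^*$ such that $p \in \delta_A^*(I_A, w_p)$. Such a $w_p$ exists because $A$ is trim, and the proof of Lemma~\ref{lm:state_kbd} already exhibits exactly this $\delta_M^*(I_M, w_p)$ as an equivalent state of $p$ in $M$. Since $M$ is a DFA there is at most one such state for each $w_p$, and since $M$ is minimal any two equivalent states coincide, so the resulting state of $M$ is independent of the choice of witness $w_p$ and is therefore the $\Phi(p)$ of the definition.

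Once this is in hand, the corollary is essentially a transit computation. Assuming $(p, w, q) \in \delta_A^*$, I would concatenate a witness path from $I_A$ to $p$ reading $w_p$ with the given path from $p$ to $q$ reading $w$ to obtain a witness for $q$, namely $w_p w$. Applying the formula a second time yields $\Phi(q) = \delta_M^*(I_M, w_p w)$. Splitting this along the word $w_p w$ using the determinism of $M$ gives $\delta_M^*(I_M, w_p w) = \delta_M^*(\Phi(p), w)$, so $\Phi(q) = \delta_M^*(\Phi(p), w)$, which is precisely the statement $(\Phi(p), w, \Phi(q)) \in \delta_M^*$.

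The only point requiring care, and the main obstacle I foresee, is justifying that the formula for $\Phi(p)$ does not depend on the particular witness $w_p$. This is exactly the situation invoked by the remark preceding the statement: in $A$, at most one state is reached from $I_A$ by a given word, so any two witnesses $w_p$ and $w_p'$ for the same state $p$ produce states $\delta_M^*(I_M, w_p)$ and $\delta_M^*(I_M, w_p')$ in $M$ that are both equivalent to $p$, hence equal by the minimality of $M$. Establishing this well-definedness cleanly is what makes the rest of the proof a one-line concatenation of paths.
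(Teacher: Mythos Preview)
Your argument is correct. The paper does not spell out a proof beyond the one-line remark ``since for any state of $A$, at most one state is reached for any word,'' so your elaboration via the witness formula $\Phi(p) = \delta_M^*(I_M, w_p)$ is a faithful way to fill in the details, drawing directly on the mechanism of Lemma~\ref{lm:state_kbd}.

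If anything, the paper's remark hints at a marginally more direct route that avoids going back to the initial state: from block determinism, $(p,w,q)\in\delta_A^*$ forces $\LsA{A}{q}=w^{-1}\LsA{A}{p}$; since $\LsA{M}{\Phi(p)}=\LsA{A}{p}$ and $M$ is a DFA, the state $\delta_M^*(\Phi(p),w)$ exists (as $\LsA{A}{q}\neq\emptyset$ by trimness) and has right language $\LsA{A}{q}=\LsA{M}{\Phi(q)}$, hence equals $\Phi(q)$ by minimality. Your approach reaches the same endpoint by concatenating an access path $w_p$ with $w$ and invoking Lemma~\ref{lm:state_kbd} twice; the difference is purely organisational, and your careful treatment of the well-definedness of the witness formula is a nice bonus that the paper leaves implicit.
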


    Thus, if two states $p$ and $q$ of $A$ belong to the same orbit, then $\Phi(p)$ and $\Phi(q)$ also belong to the same orbit in $M$.
    So, we can define the function $\Omega$ which associates to any orbit $O$ of $A$ an orbit $K$ of $M$ such that $\Phi(O)$ is a subset of $K$, and the function $\CQ{O} = \Omega(O) \setminus \Phi(O)$ which represents the set of states of $\Omega(O)$ that have no equivalent in $O$.
    Then, an orbit $O$ of $A$ is \emph{maximal} if for any distinct orbit $O'$ such that $O'$ is reached from $O$, we have $\Omega(O) \neq \Omega(O')$.
   
    Considering only final states, the function $\Phi$ is surjective:

\begin{lemma}\label{lm:fstate_reachability}
    Let $M$ be the minimal DFA of a trim deterministic block automaton $A$.
    Then, we have $\Phi(F_A) = F_M$.
\end{lemma}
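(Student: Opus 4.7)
The plan is to establish both inclusions $\Phi(F_A) \subseteq F_M$ and $F_M \subseteq \Phi(F_A)$ separately, using the definition of $\Phi$ (which sends a state to its equivalent in the minimal DFA) and the determinism of $M$.

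For the forward inclusion, I would take any $p \in F_A$ and observe that $\varepsilon$ lies in $\LsA{A}{p}$. Since $\Phi(p)$ is by definition an equivalent state in $M$, i.e. $\LsA{A}{p} = \LsA{M}{\Phi(p)}$, the empty word also lies in $\LsA{M}{\Phi(p)}$, forcing $\Phi(p) \in F_M$. This step is essentially definitional.

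For the reverse inclusion, which is the main content, I would start with an arbitrary $q \in F_M$ and use that the minimal DFA is trim: there exists $w \in \Sigma^*$ with $q \in \delta_M^*(I_M, w)$, so $w \in \LA{M} = \LA{A}$. Trimness and determinism of $A$ then give some $p \in F_A$ with $p \in \delta_A^*(I_A, w)$. Applying Corollary~\ref{coro:state_kbd} to the path $(I_A, w, p)$ yields $(\Phi(I_A), w, \Phi(p)) \in \delta_M^*$. A small preliminary observation I need is that $\Phi(I_A) = I_M$: since $A$ is deterministic it has a unique initial state whose right language is $\LA{A} = \LA{M}$, and in the minimal DFA $I_M$ is the unique state with that right language. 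Combining this with determinism of $M$ (so at most one state is reached along $w$ from $I_M$) forces $\Phi(p) = q$, which places $q$ in $\Phi(F_A)$.

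The main obstacle is the reverse inclusion, specifically ensuring that the final state $p$ selected in $A$ is the one whose image under $\Phi$ is precisely $q$ (and not some other final state of $M$ reached by a different path). This is where the determinism of $M$ together with the identification $\Phi(I_A) = I_M$ is crucial: without these, several final states of $A$ mapped by $\Phi$ elsewhere could still witness $w \in \LA{A}$. Once this pinning-down is done via Corollary~\ref{coro:state_kbd}, the rest is routine.
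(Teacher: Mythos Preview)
Your proposal is correct and follows essentially the same route as the paper: both arguments obtain the nontrivial inclusion $F_M \subseteq \Phi(F_A)$ by producing, for a given final state of $M$, a path in $A$ to some $f_A \in F_A$, then invoking Corollary~\ref{coro:state_kbd} together with the determinism of $M$ to pin down $\Phi(f_A)$. The only cosmetic difference is that you anchor the argument explicitly at the initial state (and spell out $\Phi(I_A)=I_M$), whereas the paper phrases it from an arbitrary state $p$ of $A$ with $(\Phi(p),w,f_M)\in\delta_M^*$; your version has the small advantage of making the existence of such a starting point explicit.
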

\begin{proof}
    Let $p$ be a state of $A$ and $f_M$ be a final state of $M$ such that $(\Phi(p), w, f_M)$ belongs to $\delta_M^*$.
    Since $p$ and $\Phi(p)$ are equivalent, there exists a final state $f_A$ of $A$ such that $(p, w, f_A)$ belongs to $\delta_A^*$.
    Following Corollary~\ref{coro:state_kbd}, $(\Phi(p), w, \Phi(f_A))$ belongs to $\delta_M^*$, which means that $\Phi(f_A) = f_M$.
\end{proof}

    The function $\Phi$ can be trivially extended from trim block automata to non-necessarily trim residual ones and consequently Lemma~\ref{lm:state_kbd} and Lemma~\ref{lm:fstate_reachability} still hold for residual block automata.   
    
    Then, let us show that the set of trim compact deterministic block automata is finite and computable.
    We define a super-automaton, including all these candidates, obtained as a finite closure of the transition function with respect to the words of length at most $k$.
    Following Lemma~\ref{lm:fstate_reachability}, for the language and the block determinism to be preserved, a final state should not be avoided.
    
\begin{definition}
	Let $L$ be a language and $M$ be its minimal DFA.
	The $k$-\emph{transition automaton} $T$ of $L$ is defined by
	$$\begin{array}{l}
	\Sigma_T=\Sigma_M,\ Q_T=Q_M,\ I_T=I_M,\ F_T=F_M, \\
	 \delta_T = \{(p, w, r) \in \delta^*_M \mid 1 \leq |w| \leq k \wedge \forall u \in \mathrm{Pref}(w) \setminus \{\varepsilon, w\}, u \notin \LsA{M}{p}\}.
	 \end{array}$$
\end{definition}    
    
    As an example, let us consider the minimal DFA $M$ in Figure~\ref{figMinDFA}.
    The $2$-transition automaton $T$ of $\LA{M}$ is presented Figure~\ref{ExempleATU}.
    Notice that any transition in $T$ is obtained by extending some transitions of $M$, except if a final state is reached. 
    Thus, the transitions $(1, aa, i)$ and $(1, ab, 1)$ do not exist since they would go through the final state $2$.
    
\begin{figure}[H]
    \begin{minipage}[b]{.43\linewidth}
        \centering
        \begin{tikzpicture}[node distance=2cm]
			\node[state, initial] (i) {$i$};
			\node[state, above left of=i] (1) {$1$};
			\node[state, accepting, above right of=i] (2) {$2$};
			\path[->]
				(i) edge node {$a$} (1)
				(1) edge [loop left] node {$b$} ()
				(1) edge [bend left=25] node {$a$} (2)
				(2) edge [bend left=25] node {$b$} (1)
				(2) edge node {$a$} (i)
			;
		\end{tikzpicture}
        \caption{The minimal DFA $M$}
        \label{figMinDFA}
    \end{minipage}
    \hfill
    \begin{minipage}[b]{.56\linewidth}
        \centering
        \begin{tikzpicture}[node distance=2.5cm]
			\node[state, initial] (i) {$i$};
			\node[state, above left of=i] (1) {$1$};
			\node[state, accepting, above right of=i] (2) {$2$};
			\path[->]
				(i) edge node {$a, ab$} (1)
				(i) edge [bend left=15] node {$aa$} (2)
				(1) edge [loop left] node {$b, bb$} ()
				(1) edge [bend left=15] node {$a, ba$} (2)
				(2) edge [bend left=15,swap] node {$aa, bb, b$} (1)
				(2) edge [loop right] node {$ba$} ()
				(2) edge [bend left=15] node {$a$} (i)
				;
        \end{tikzpicture}
        \caption{The $2$-transition automaton $T$ of $\LA{M}$}
        \label{ExempleATU}
    \end{minipage}
\end{figure}

    We show that any compact deterministic $k$-block automaton recognizing a language $L$ can be computed from the $k$-transition automaton of $L$:

\begin{proposition}\label{prop:tcKBD_KTA}
    Let $B$ be a trim deterministic $k$-block automaton of a language $L$ and $T$ be the $k$-transition automaton of $L$.
    Then $B$ is compact if and only if it is isomorphic to a sub-automaton of $T$.
\end{proposition}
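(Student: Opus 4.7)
The plan is to observe that the map $\Phi$ from Lemma~\ref{lm:state_kbd}, which sends each state $p$ of $B$ to the unique state of $M$ whose right language equals $\LsA{B}{p}$, essentially determines the only possible isomorphism. Let $M$ denote the minimal DFA of $L$. Since $\LsA{B}{p}=\LsA{M}{\Phi(p)}$ by Lemma~\ref{lm:state_kbd} and $M$ is minimal, $B$ is compact if and only if $\Phi$ is injective, and both directions of the proposition reduce to this injectivity.

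For the forward direction, I would assume $B$ compact, so $\Phi$ is injective, and take the candidate sub-automaton $B'$ of $T$ whose states, initial state, final states and transitions are the $\Phi$-images of those of $B$. The initial singleton goes to $\{I_M\}=I_T$ because the initial state of $B$ has right language $L$; final states land in $F_M=F_T$ because they accept $\varepsilon$; and Corollary~\ref{coro:state_kbd} places each image transition in $\delta_M^*$. The one non-trivial check is that, for every $(p,w,q)\in\delta_B$, no proper non-empty prefix $u$ of $w$ lies in $\LsA{M}{\Phi(p)}=\LsA{B}{p}$: otherwise a block decomposition of $u$ in $B$ would begin with some out-label $b_1$ of $p$ that is a proper prefix of $w$, contradicting the prefix-freeness of the out-labels of $p$, since $w$ is also one of them.

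For the backward direction, let $\psi$ be an isomorphism from $B$ onto a sub-automaton $B'$ of $T$. The heart of the argument is the identity
\[
\LsA{B'}{p'} = \LsA{M}{p'} \quad \text{for every } p' \in Q_{B'}.
\]
Once this is established, $\LsA{B}{p}=\LsA{B'}{\psi(p)}=\LsA{M}{\psi(p)}$ forces $\psi=\Phi$ by minimality of $M$, so the injectivity of $\psi$ transfers to $\Phi$ and $B$ is compact. The inclusion $\LsA{B'}{p'}\subseteq\LsA{M}{p'}$ is immediate since every $T$-transition is an $M$-path and $B'$ is a sub-automaton of $T$. The reverse inclusion is the main obstacle: given $w\in\LsA{M}{p'}$, I would pick $w_1$ with $p'\in\delta_{B'}^*(I_{B'},w_1)$ by trimness, observe that $w_1 w\in\LA{M}=\LA{B'}$, and argue that the unique deterministic block reading of $w_1 w$ in $B'$ passes through $p'$ exactly at position $|w_1|$. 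This alignment relies on the prefix-freeness of out-labels at each intermediate state: if a block $b$ is the unique out-label prefix of the remaining input while reading $w_1$, it remains the unique out-label prefix of the longer remaining input while reading $w_1 w$, since two out-labels of the same state that are both prefixes of a common word must coincide. The suffix of the accepting path of $w_1 w$ then witnesses $w\in\LsA{B'}{p'}$, completing the key identity and hence the backward direction.
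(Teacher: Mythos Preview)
Your proof is correct, and the forward direction is essentially the paper's argument, only more explicit: you spell out the prefix-freeness check that guarantees $(\Phi(p),w,\Phi(q))\in\delta_T$, whereas the paper dismisses it with ``by construction''. Your justification via a block decomposition of the offending prefix $u$ is exactly the right one.

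The backward direction is where you diverge. You prove $\LsA{B'}{p'}=\LsA{M}{p'}$ by a hands-on alignment argument, showing that the block decomposition of $w_1$ in $B'$ is an initial segment of the block decomposition of $w_1w$. This is valid, but the paper short-circuits the whole discussion with a single structural observation: since $\delta_M\subseteq\delta_T\subseteq\delta_M^*$, one has $\delta_T^*=\delta_M^*$. Hence any state $q$ reached from $I_{B'}$ by $w$ in the sub-automaton $B'$ is also reached from $I_M$ by $w$ in $M$, and the proof of Lemma~\ref{lm:state_kbd} (applied to $B'$, which recognizes $L$) then forces $\Phi_{B'}(q)=q$ directly. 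This collapses your reverse inclusion and your alignment lemma into one line. Your route has the minor advantage of being self-contained (it does not reopen the proof of Lemma~\ref{lm:state_kbd}), but the paper's observation $\delta_T^*=\delta_M^*$ is worth internalising: it is the conceptual reason why states of $T$ inherit their right languages from $M$.
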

\begin{proof}
    Let $M$ be the minimal DFA of $L$. 

    Let us suppose that $B$ is a sub-automaton of $T$.
    Let $q$ be a state of $B$ and $w$ be a word in $\Sigma^*$ such that $q$ belongs to $\delta_B^*(I_B, w)$.
    Since $B$ is a sub-automaton of $T$, $q$ also belongs to $\delta_T^*(I_T, w)$.
    By definition of the $k$-transition automaton, $\delta_M \subset \delta_T \subset \delta_M^*$, which means that $\delta_T^* = \delta_M^*$.
    Thus, $q$ belongs to $\delta_M^*(I_M, w)$, and since $B$ is deterministic and recognizes $L$, $\Phi_B(q) = q$.
    Since $M$ is compact, so is $B$.
    
    Now, let us suppose that $B$ is compact.
    Since $B$ is deterministic and recognizes $L$, we have $\Phi(I_B) = I_M = I_T$ and following Lemma~\ref{lm:state_kbd} and Lemma~\ref{lm:fstate_reachability}, we have $\Phi(Q_B) \subset Q_M = Q_T$ and $\Phi(F_B) = F_M = F_T$.
    As $B$ is trim and compact, for any state $s$ of $M$ (and thus of $T$), there is at most one state $r$ of $B$ such that $\Phi(r)= s$ and thus $|Q_B| \leq |Q_T|$.
    Moreover, following Corollary~\ref{coro:state_kbd}, for any transition $(p, w, q)$ in $\delta_B$, $(\Phi(p), w, \Phi(q))$ belongs to $\delta_M^*$.
    Since $B$ and $T$ are both $k$-block, by construction, $(\Phi(p), w, \Phi(q))$ belongs to $\delta_T$.
    Thus, $B$ is isomorphic to a sub-automaton of $T$.
\end{proof}

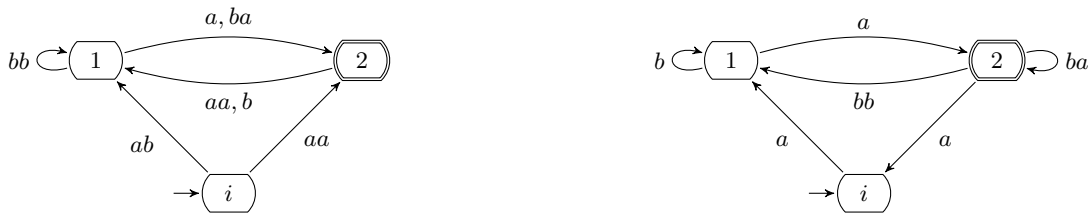
\begin{figure}
    \begin{minipage}[b]{.44\linewidth}
        \centering
        \begin{tikzpicture}[node distance=2.5cm]
			\node[state, initial] (i) {$i$};
			\node[state, above left of=i] (1) {$1$};
			\node[state, accepting, above right of=i] (2) {$2$};
			\path[->]
				(i) edge node {$ab$} (1)
				(i) edge [swap] node {$aa$} (2)
				(1) edge [loop left] node {$bb$} ()
				(1) edge [bend left=15] node {$a, ba$} (2)
				(2) edge [swap, bend left=15,swap] node {$aa, b$} (1)
				;
        \end{tikzpicture}
    \end{minipage}
    \hfill
    \begin{minipage}[b]{.54\linewidth}
        \centering
        \begin{tikzpicture}[node distance=2.5cm]
			\node[state, initial] (i) {$i$};
			\node[state, above left of=i] (1) {$1$};
			\node[state, accepting, above right of=i] (2) {$2$};
			\path[->]
				(i) edge node {$a$} (1)
				(1) edge [loop left] node {$b$} ()
				(1) edge [bend left=15] node {$a$} (2)
				(2) edge [swap, bend left=15,swap] node {$bb$} (1)
				(2) edge [loop right] node {$ba$} ()
				(2) edge node {$a$} (i)
				;
        \end{tikzpicture}
    \end{minipage}
    \caption{Two sub-automata of $T$ recognizing $L(T)$}
    \label{fg:ExPrefMax}
\end{figure}

    Notice that the set of sub-automata of a $k$-transition automaton of a language is finite and computable by brute force algorithms.
    Thus,
    
\begin{theorem}\label{thm:KBD_FIN}
    The set of trim compact deterministic $k$-block automata recognizing a language $L$ is finite and computable.
\end{theorem}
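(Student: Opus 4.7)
The plan is to reduce Theorem~\ref{thm:KBD_FIN} almost entirely to Proposition~\ref{prop:tcKBD_KTA}, which already tells us that, up to isomorphism, every trim compact deterministic $k$-block automaton recognizing $L$ sits inside the $k$-transition automaton $T$ of $L$. So the bulk of the work is to argue that $T$ itself is a finite object we can explicitly construct, and that the selection of the ``correct'' sub-automata among those of $T$ is effective.

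First I would establish finiteness. Starting from the minimal DFA $M$ of $L$ (which is finite and canonical), the $k$-transition automaton $T$ has $Q_T=Q_M$, hence finitely many states, and each transition is labelled by a word $w$ with $1 \leq |w| \leq k$, of which there are only finitely many over the finite alphabet $\Sigma_M$. Hence $\delta_T$ is finite, so $T$ has only finitely many sub-automata. By Proposition~\ref{prop:tcKBD_KTA}, every trim compact deterministic $k$-block automaton recognizing $L$ is isomorphic to one of these sub-automata, giving finiteness up to isomorphism.

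Next I would turn to computability. To build $T$ effectively it suffices to enumerate, from each state $p$ of $M$, all words $w$ of length between $1$ and $k$ and check the two conditions $\delta_M^*(p,w)\ne\emptyset$ and $u\notin\mathrm{L}_M(p)$ for every proper non-empty prefix $u$ of $w$; both are decidable on the finite DFA $M$. Then the candidate trim compact deterministic $k$-block automata recognizing $L$ are obtained by enumerating all sub-automata of $T$ and retaining those that are (i) deterministic in the block sense (no label of an outgoing transition is a prefix of another), (ii) trim, and (iii) recognize $L$. Each of these three properties is decidable for finite block automata: (i) is a direct combinatorial check on out-transitions, (ii) is standard accessibility and co-accessibility reachability, and (iii) reduces to equivalence of the language of the candidate with $\mathrm{L}_M$ via the underlying alphabetic automaton (viewing each block as a path of fresh transitions).

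The only step that is not completely routine is making sure the filter in the previous paragraph is actually correct and not vacuous, i.e.\ that compactness is guaranteed for every sub-automaton passing these tests: this is exactly the ``only if'' direction of Proposition~\ref{prop:tcKBD_KTA}, which implies that any such sub-automaton which happens to recognize $L$ is already compact (states of $T$ are states of the minimal DFA, hence pairwise inequivalent). Combining the three paragraphs yields the desired finite and computable set, completing the proof.
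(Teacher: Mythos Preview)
Your proof is correct and follows the same approach as the paper, which simply remarks after Proposition~\ref{prop:tcKBD_KTA} that the sub-automata of the (finite, constructible) $k$-transition automaton are finitely many and can be enumerated and filtered by brute force. One minor slip: the direction of Proposition~\ref{prop:tcKBD_KTA} you invoke in your last paragraph (``sub-automaton of $T$ recognizing $L$ implies compact'') is the \emph{if} direction, not the \emph{only if} direction.
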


%

\section{Compaction of deterministic block automata passing the \textrm{BW}-test}\label{se:compaction}

    Giammarresi and Montalbano~\cite{GM99} give a procedure to get a compact block automaton from a deterministic one, by merging equivalent states and selecting a subset of transitions.
    Unfortunately, this procedure does not necessarily preserve the \textrm{BW}-test.

    In this section, we fix this problem.
    The idea underlying this procedure applied on a block automaton follows the definition of the \textrm{BW}-test, and can be decomposed in two main steps:
\begin{itemize}
    \item[$\bullet$] Transforming each orbit into a compact one by removing the synchronizing transitions of its orbital automaton, computing an equivalent compact automaton and reintroducing the removed transitions.
    \item[$\bullet$] Selecting a minimal subset of compact orbits representing the set of orbits of its minimal DFA.
\end{itemize}

    In this section, we describe and prove the properties of the different steps of our procedure.
    A running example is given in Figure~\ref{fig:ExAlgoA} to enlighten the purpose.


\begin{figure}[H]
    \centering
    \begin{tikzpicture}
		\node[state, initial] (i) {$i_A$};
		\node[state, above right of=i] (1) {$1$};
		\node[state, above right of=1] (3) {$3$};
		\node[state, below right of=1] (3') {$3'$};
		\node[state, below right of=3] (4) {$4$};
		\node[state, below left of=3'] (5) {$5$};
		\node[state, below of=5] (2) {$2$};
		\node[state, right of=2] (3'') {$3''$};
		\node[state, accepting, right of=4, xshift=0.5cm] (f1) {$f_1$};
		\node[state, accepting, right of=f1] (f2) {$f_2$};		
		\path[->]
			(i) edge node {$a$} (1)
			(i) edge [swap, bend right=45] node {$ba$} (2)
			(1) edge node {$aa$} (3)
			(1) edge [swap] node {$ab$} (3')
			(3) edge node {$a$} (4)
			(3') edge [swap] node {$a$} (4)			
			(4) edge [swap] node {$aab, aaa$} (1)
			(4) edge node {$ab, aac$} (f1)
			(2) [swap] edge node {$a, b$} (3'')
			(3'') edge node {$aa$} (5)
			(5) edge node {$aaa$} (2)
			(5) edge [swap] node {$ab$} (1)
			(5) edge [bend right=25] node {$b, ac$} (f1)
			(f1) edge [swap] node {$a$} (f2)
			(f2) edge [loop above] node {$a$} ()
		;
	\end{tikzpicture}
	\caption{The automaton $A$}
    \label{fig:ExAlgoA}
\end{figure}
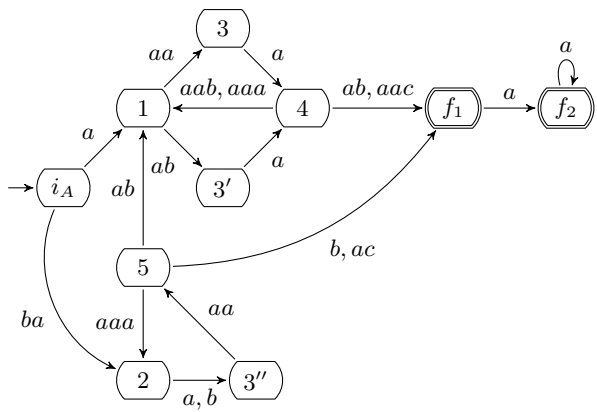

\subsection{Extracting and substituting an orbit}\label{ss:ExtSubsOrb}

    Extracting an orbit $O$ of an automaton $A$ consists in computing one of its orbital automaton.
    Obviously, this operation preserves the determinism and the maximal size of the blocks.
    
    Moreover, following Theorem~\ref{th:BWtest} and the description of the \textrm{BW}-test, it holds that:
    
\begin{corollary}\label{coro:BWt_orbAut}
    Let $q$ be a state of a block automaton $A$.
    If $A$ passes the \textrm{BW}-test, then so does $\AutOrb{A}{q}$.
\end{corollary}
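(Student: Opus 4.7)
The plan is to run the BW-test on $\AutOrb{A}{q}$ step by step and identify each step with a step already known to succeed in the BW-test on $A$. Set $O = \OrbAs{}{q}$. If $O$ is trivial, then $\AutOrb{A}{q}$ reduces to a single state with no transitions, and its BW-test succeeds vacuously, so I may assume that $O$ is non-trivial.

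I would first describe the orbits of $\AutOrb{A}{q}$. Its transitions are exactly the internal transitions of $O$, and $O$ is a strongly connected component of $A$, so $\AutOrb{A}{q}$ has a single non-trivial orbit, namely its whole state set $O$. Since no transition leaves $O$ inside $\AutOrb{A}{q}$, the gates of this orbit coincide with the final states of $\AutOrb{A}{q}$, which are themselves the gates of $O$ in $A$. The transverse condition is then vacuous (all gates are final and there are no out-transitions), so $\AutOrb{A}{q}$ satisfies the orbit property.

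Next I would address the recursive call launched for the orbit $O$ of $\AutOrb{A}{q}$. Any orbital automaton $B'$ of $O$ extracted from $\AutOrb{A}{q}$ agrees, on states, internal transitions and final states, with any orbital automaton $B$ of $O$ extracted from $A$; only the initial state, which lies in $O$ in both cases, is free. Because $\cset{B}$ depends solely on the final states and the internal transitions, one has $\cset{B'} = \cset{B}$, and for every pair $(b,s) \in \cset{B}$ the automaton $\AutCut{(\AutOrb{B'}{s})}{(b,s)}$ fed to step (d) on $\AutOrb{A}{q}$ equals the automaton $\AutCut{(\AutOrb{B}{s})}{(b,s)}$ fed to step (d) on $A$.

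Since $A$ passes the BW-test by hypothesis, the recursive call for $O$ inside $A$ succeeds, and by the above identification the recursive call for the unique non-trivial orbit of $\AutOrb{A}{q}$ succeeds with the very same choice. Hence $\AutOrb{A}{q}$ passes the BW-test. The main point to watch is precisely this last identification: it relies on the observation that the data accessed in steps (a)--(d) of the BW-test for an orbit---orbital automaton up to initial state, set of consistency, and resulting cut---depend only on the restriction of the ambient automaton to that orbit, so the tests on $A$ and on $\AutOrb{A}{q}$ reduce to literally the same recursive subproblem.
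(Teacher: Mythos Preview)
Your proof is correct and follows exactly the line the paper intends: the corollary is stated in the paper without proof, as an immediate consequence of Theorem~\ref{th:BWtest} and the description of the \textrm{BW}-test, and your argument is precisely the unpacking of that remark---the unique non-trivial orbit of $\AutOrb{A}{q}$ is $O$ itself, its orbital automata coincide (up to initial state) with those of $O$ in $A$, hence the recursive call in the \textrm{BW}-test on $\AutOrb{A}{q}$ is literally the recursive call already performed successfully in the \textrm{BW}-test on $A$.
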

    
\begin{example}
    In the automaton $A$ of Figure~\ref{fig:ExAlgoA}, the orbit $\{1, 3, 3', 4\}$ is not compact.
    Thus, we proceed to its extraction to get the automaton $B = \AutOrb{A}{1}$ of Figure~\ref{fig:ExAlgoB}.
\end{example}

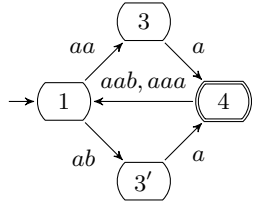
\begin{figure}[H]
    \centering
    \begin{tikzpicture}
		\node[state, initial] (1) {$1$};
		\node[state, above right of=1] (3) {$3$};
		\node[state, below right of=1] (3') {$3'$};
		\node[state, accepting, below right of=3] (4) {$4$};
		\path[->]
            (1) edge node {$aa$} (3)
            (1) edge [swap] node {$ab$} (3')
            (3) edge node {$a$} (4)
            (3') edge [swap] node {$a$} (4)			
            (4) edge [swap] node {$aab, aaa$} (1)
		;
	\end{tikzpicture}
	\caption{The automaton $B$}
    \label{fig:ExAlgoB}
\end{figure}
    
	Let $R$ be a non re-entering component.
	The \emph{internal language} of a state $p$ in $R$, denoted by $\Lin{R}{p}$, is the set of words which send $p$ to the gates of $R$.
	The \emph{external language} of $p$, denoted by $\Lout{R}{p}$, is the set of words which send $p$ to a final state without using any internal transition in $R$.
	Thus, $\Lout{R}{p} = \emptyset$ if and only if $p$ is not a gate.
    If $R$ is transverse, then for any two gates $g_1$ and $g_2$ of $R$, we have $\Lout{R}{g_1} = \Lout{R}{g_2}$.
	In this case, we define $\Lout{}{R}=\Lout{R}{g_1}$ as \emph{the external language of $R$}.

\begin{lemma}\label{lm:lso}
	Let $R$ be a transverse non re-entering component of a block automaton $A$.
	Then, for every state $p$ in $R$, we have $\LsA{}{p} = \Lin{R}{p} \cdot \Lout{}{R}$.
\end{lemma}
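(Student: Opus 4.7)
The plan is to establish the equality $\LsA{}{p} = \Lin{R}{p} \cdot \Lout{}{R}$ by proving both inclusions, relying essentially on the non re-entering property to decompose every accepting path into an internal and an external part, and on transversality to identify $\Lout{R}{g}$ with $\Lout{}{R}$ at whichever gate $g$ the split occurs.

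For the inclusion $\Lin{R}{p} \cdot \Lout{}{R} \subseteq \LsA{}{p}$, I would take a word $w = uv$ with $u \in \Lin{R}{p}$ and $v \in \Lout{}{R}$. By definition of $\Lin{R}{p}$, the word $u$ labels a path in $A$ that stays inside $R$ and reaches some gate $g$ of $R$. Since $R$ is transverse, $\Lout{R}{g} = \Lout{}{R}$, so $v$ labels a path from $g$ to some final state that uses no internal transition of $R$. Concatenating these two paths produces an accepting path for $w$ starting at $p$, hence $w \in \LsA{}{p}$.

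The main work is the reverse inclusion $\LsA{}{p} \subseteq \Lin{R}{p} \cdot \Lout{}{R}$. Given $w \in \LsA{}{p}$, pick a path $p = p_0, p_1, \dots, p_n = f$ in $A$ whose label is $w$ and whose last state $f$ is final. Because $R$ is non re-entering, the set of indices $i$ with $p_i \in R$ is an initial segment $\{0,1,\dots,j\}$ of $\{0,\dots,n\}$. I claim $p_j$ is a gate of $R$: either $j = n$, in which case $p_j = f$ is final, or $j < n$, in which case the transition from $p_j$ to $p_{j+1}$ crosses out of $R$ and is therefore an out-transition of $R$. Let $u$ be the label of the prefix path $p_0 \to \cdots \to p_j$ and $v$ the label of the suffix path $p_j \to \cdots \to p_n$. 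Then $u \in \Lin{R}{p}$ since the prefix path stays in $R$ and ends at the gate $p_j$, while $v \in \Lout{R}{p_j}$ since the suffix path uses no internal transition of $R$ (its intermediate states all lie outside $R$). Finally, transversality of $R$ yields $\Lout{R}{p_j} = \Lout{}{R}$, so $v \in \Lout{}{R}$ and $w = uv \in \Lin{R}{p} \cdot \Lout{}{R}$.

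The only delicate point is the edge case where the decomposition degenerates, i.e.\ $j = 0$ (the path immediately leaves $R$) or $j = n$ (the path never leaves $R$); in both cases one of $u$ or $v$ is $\varepsilon$, and one must check that $\varepsilon$ lies in the appropriate language ($\varepsilon \in \Lin{R}{p}$ whenever $p$ is itself a gate, and $\varepsilon \in \Lout{R}{p_j}$ whenever $p_j$ is final). These special cases fall out directly from the definitions of $\Lin{R}{\cdot}$ and $\Lout{R}{\cdot}$, so I do not expect any genuine obstacle beyond this bookkeeping.
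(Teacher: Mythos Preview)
Your proposal is correct and follows essentially the same approach as the paper: both arguments decompose an accepting path from $p$ at the last state of $R$ visited, observing that this state is a gate and invoking transversality to identify its external language with $\Lout{}{R}$. Your version is more explicit (you prove both inclusions separately and spell out the edge cases $j=0$ and $j=n$), whereas the paper compresses everything into a single ``if and only if'' and a case split on whether the final state $f$ lies in $R$, but the underlying idea is identical.
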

\begin{proof}
    Let $w$ be a word.
    Then $w$ belongs to $\LsA{}{p}$ if and only if there exists a final state $f$ such that $(p, w, f)$ belongs to $\delta^*$.
	If $f$ belongs to $R$, then $f$ is a gate, which means that $w$ belongs to $\Lin{R}{p}$ and $\varepsilon$ belongs to $\Lout{}{R}$.
	Otherwise, there exists a gate $g$ of $R$ such that $(p, u, g)$ and $(g, v, f)$ belong to $\delta^*$ with $w = uv$, which means that $u$ belongs to $\Lin{R}{p}$ and $v$ belongs to $\Lout{R}{g}$, that is to say $\Lout{}{R}$.
\end{proof}

\begin{proposition}\label{prop:lorb}
	Let $R$ be a transverse non re-entering component of a deterministic block automaton $A$.
	Then, for any two states $p$ and $q$ in $R$, we have $\LsA{}{p} = \LsA{}{q}$ if and only if $\Lin{R}{p} = \Lin{R}{q}$.
\end{proposition}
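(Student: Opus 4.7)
The $(\Leftarrow)$ direction follows directly from Lemma \ref{lm:lso}: if $\Lin{R}{p} = \Lin{R}{q}$ then $\LsA{}{p} = \Lin{R}{p} \cdot \Lout{}{R} = \Lin{R}{q} \cdot \Lout{}{R} = \LsA{}{q}$.

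For $(\Rightarrow)$, I would assume $\LsA{}{p} = \LsA{}{q}$ and prove $\Lin{R}{p} \subseteq \Lin{R}{q}$, the reverse inclusion following by symmetry. The degenerate case where $\Lin{R}{p}$ is empty is immediate, so I may assume $R$ contains a gate, in which case $\Lout{}{R}$ is non-empty under the co-accessibility hypothesis; fix $v_0 \in \Lout{}{R}$ of minimal length. Then I proceed by strong induction on $|w|$ to show $w \in \Lin{R}{p} \Rightarrow w \in \Lin{R}{q}$. For the base case $w = \varepsilon$, I need $p$ being a gate to force $q$ to be one: otherwise $v_0 \in \LsA{}{p} = \LsA{}{q}$ and the unique path from $q$ reading $v_0$ must begin with at least one internal transition (since $q$ is not a gate), leaving an external suffix in $\Lout{}{R}$ strictly shorter than $v_0$, contradicting minimality.

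For the inductive step with $|w| \geq 1$, note $wv_0 \in \LsA{}{p} = \LsA{}{q}$, so by determinism there is a unique path from $q$ reading $wv_0$; the non re-entering property of $R$ decomposes it as an internal prefix reading $w_q$ to a gate $g_q$ followed by an external suffix reading $v_q$ with $w_q v_q = wv_0$, where $w_q \in \Lin{R}{q}$ and, by transversality, $v_q \in \Lout{}{R}$. Minimality of $|v_0|$ forces $|w_q| \leq |w|$; the case $|w_q| = |w|$ gives $w = w_q \in \Lin{R}{q}$ and we are done. Otherwise $|w_q| < |w|$, and the inductive hypothesis gives $w_q \in \Lin{R}{p}$; prefix-freeness of outgoing blocks from $p$ then forces the internal block decomposition of $w_q$ from $p$ to be an initial segment of that of $w$, so $(p, w_q, g') \in \delta_A^*$ with $g'$ a gate lying on the path $(p, w, g)$. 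Let $\gamma$ be the next internal block at $g'$ along the path to $g$, so $\gamma$ is a prefix of $u = w / w_q$ and therefore of $v_q = u v_0$. By transversality, $g'$ inherits from $g_q$ the first external outgoing block $\beta$ used to start reading $v_q$, which is likewise a prefix of $v_q$. Thus $\gamma$ and $\beta$ are two outgoing blocks at $g'$ that are both prefixes of $v_q$; prefix-freeness forces $\gamma = \beta$, yet their targets lie inside and outside $R$ respectively, contradicting the determinism of $A$.

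The main obstacle is precisely this final contradiction in the inductive step, where I must simultaneously invoke the inductive hypothesis (to locate the gate $g'$), block-level determinism on $p$ (to align the block decompositions of $w_q$ and $w$), transversality (to transfer the external transition from $g_q$ to $g'$), and prefix-freeness of outgoing blocks at $g'$ (to collapse two distinct transitions into one).
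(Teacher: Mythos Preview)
Your argument is essentially the paper's own proof recast as a strong induction (the paper argues by minimal counterexample in the symmetric difference, which is the same thing), and the final contradiction via prefix-freeness at the gate $g'$ is exactly what the paper does more tersely. One genuine issue, however: you set up the induction to prove only $w\in \Lin{R}{p}\Rightarrow w\in \Lin{R}{q}$, deferring the reverse inclusion to ``symmetry'', yet in the inductive step you invoke the inductive hypothesis in the \emph{reverse} direction to obtain $w_q\in \Lin{R}{p}$ from $w_q\in \Lin{R}{q}$. That is circular as written. The fix is immediate: state the strong induction as proving the biconditional $w\in \Lin{R}{p}\Leftrightarrow w\in \Lin{R}{q}$ for all $|w|\le n$, so that both directions are available for shorter words. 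This is precisely what the paper's ``shortest word in the symmetric difference'' formulation buys automatically (any proper prefix of the minimal counterexample lies in $\Lin{R}{p}\cap\Lin{R}{q}$ or in neither).

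A minor remark in your favour: you are right to flag co-accessibility as needed for $\Lout{}{R}\neq\emptyset$; the paper's own proof silently assumes this when it picks a shortest word of $\Lout{}{R}$.
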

\begin{proof}
	Let us suppose that $\Lin{R}{p} = \Lin{R}{q}$.
	Since $R$ is transverse, from Lemma \ref{lm:lso}, $\LsA{}{p} = \Lin{R}{p} \cdot \Lout{}{R} = \Lin{R}{q} \cdot \Lout{}{R} = \LsA{}{q}$.
	
	Let us suppose that $\LsA{}{p} = \LsA{}{q}$ and $\Lin{R}{p} \neq \Lin{R}{q}$.
    Then, there exists a shortest word $w$ in $\Lin{R}{p}$ but not in $\Lin{R}{q}$. 
    We have two cases to consider:
\begin{enumerate}
    \item Each proper prefix of $w$ is neither in $\Lin{R}{p}$ nor in $\Lin{R}{q}$. 
    Let $w_o$ be a shortest word of $\Lout{}{R}$.
    Then there exist two words $u_o, v_o$ such that $w_o=u_ov_o$, $wu_o\in \Lin{R}{q}$ and $v_o \in \Lout{}{R}$.
    However, $w_o$ is a shortest word of $\Lout{}{R}$, so $w_o=v_o$, $u_o=\varepsilon$ and $w \in \Lin{R}{q}$. 
    Contradiction.
    \item Let $u$ be the longest proper prefix of $w$ such that $u \in \Lin{R}{p} \cap \Lin{R}{q}$. 
    Let $v$ be the word such that $w = uv$ and let $u_o$ be a shortest word of $\Lout{}{R}$. 
    As $wu_o\in \LsA{}{q}$ and as $u_o$ is a shortest word, we have $\delta^*(q,w) \cap R = \emptyset$ and $vu_o \in \Lout{}{R}$.
    Let $g \in \delta^*(p,u)$. 
    As $u \in \Lin{R}{p}$, $g$ is a gate and as $w=uv \in \Lin{R}{p}$, $v \in \Lin{R}{g}$. 
    But $v$ is a prefix of a word of $\Lout{}{R}$, so the automaton is not deterministic. 
    Contradiction. 
\end{enumerate}
\end{proof}


    Now, let us study the reverse operation of extraction, the substitution of an orbit.
    
    Let $A$ be a block automaton having the orbit property, $q$ be a state of $A$, and $B$ be a block automaton $\mathcal{L}$-equivalent to $\AutOrb{A}{q}$.
    Since $B$ and $\AutOrb{A}{q}$ are $\mathcal{L}$-equivalent, there exists at least one function $h$ which associates each state $p$ of $\AutOrb{A}{q}$ 
    with a state of $B$ equivalent to $p$.

    A \emph{substitution of $B$ for $\OrbAs{}{q}$} constructs a block automaton $C$ defined as follows:  
\begin{itemize}
    \item $\Sigma_C = \Sigma_A \cup \Sigma_B$
	\item $Q_C = (Q_A \setminus \OrbAs{}{q}) \cup Q_B$
	\item $I_C = \{h(i_A)\}$ if $i_A \in \OrbAs{}{q}$, $\{i_A\}$ otherwise
	\item $F_C = (F_A \setminus \OrbAs{}{q}) \cup F_B$ if $F_A \cap \OrbAs{}{q} \neq \emptyset$, $F_A$ otherwise
	\item $\delta_C =$
        \begin{tabular}[t]{l@{\ }l}
    	  $(\delta_A \setminus (Q_A \times \Gamma_A \times \OrbAs{}{q}) \setminus (\OrbAs{}{q} \times \Gamma_A \times Q_A))$\\
	      $\cup\ \delta_B$\\
	      $\cup\ \{(p, b, h(o)) \mid (p, b, o) \in \din{}{\OrbAs{}{q}}\}$\\
	      $\cup\ \{(f_B, b, p) \mid f_B \in F_B \wedge \exists (o, b, p) \in \dout{}{\OrbAs{}{q}}\}$
        \end{tabular}
\end{itemize}    

\begin{example}
    The automaton $B'$ of Figure~\ref{fig:ExAlgoB'} is $\mathcal{L}$-equivalent to the automaton $B = \AutOrb{A}{1}$ of Figure~\ref{fig:ExAlgoB}.
    Then, the automaton $A'$ of Figure~\ref{fig:ExAlgoA'} is obtained from the automaton $A$ of Figure~\ref{fig:ExAlgoA} by substituting $B'$ for $\OrbAs{}{1}$.
\end{example}

\begin{figure}[H]
    \centering
    \begin{tikzpicture}
		\node[state, initial] (1) {$1$};
		\node[state, right of=1] (3) {$3$};
		\node[state, accepting, right of=3] (4) {$4$};
		\path[->]
            (1) edge [swap] node {$aa, ab$} (3)
            (3) edge [swap] node {$a$} (4)
            (4) edge [swap, bend right=25] node {$aab, aaa$} (1)
		;
	\end{tikzpicture}
	\caption{The automaton $B'$}
    \label{fig:ExAlgoB'}
\end{figure}

\begin{figure}[H]
    \centering
    \begin{tikzpicture}
		\node[state, initial] (i) {$i_A$};
		\node[state, above right of=i] (1) {$1$};
		\node[state, right of=1] (3) {$3$};
		\node[state, right of=3] (4) {$4$};
		\node[state, below right of=i] (5) {$5$};
		\node[state, below of=5] (2) {$2$};
		\node[state, right of=2] (3'') {$3''$};
		\node[state, accepting, below right of=4] (f1) {$f_1$};
		\node[state, accepting, right of=f1] (f2) {$f_2$};		
		\path[->]
			(i) edge node {$a$} (1)
			(i) edge [swap, bend right=45] node {$ba$} (2)
			(1) edge [swap] node {$aa, ab$} (3)
            (3) edge [swap] node {$a$} (4)
            (4) edge [swap, bend right=25] node {$aab, aaa$} (1)
			(4) edge node {$ab, aac$} (f1)
			(2) [swap] edge node {$a, b$} (3'')
			(3'') edge node {$aa$} (5)
			(5) edge node {$aaa$} (2)
			(5) edge [swap] node {$ab$} (1)
			(5) edge node {$b, ac$} (f1)
			(f1) edge [swap] node {$a$} (f2)
			(f2) edge [loop above] node {$a$} ()
		;
	\end{tikzpicture}
	\caption{The automaton $A'$}
    \label{fig:ExAlgoA'}
\end{figure}
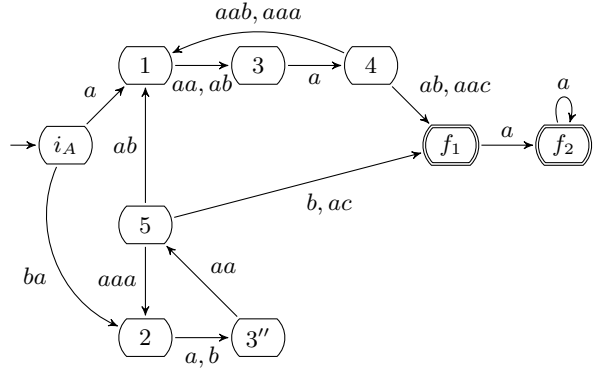
    
    Let us study the properties that are preserved by the substitution of an orbit:
    
\begin{proposition}\label{prop:prop_subst}
    Let $q$ be a state of a deterministic $k$-block automaton $A$ which passes the \textrm{BW}-test.
    Let $B$ be a deterministic $k$-block automaton $\mathcal{L}$-equivalent to $\AutOrb{A}{q}$, such that $\FBA{B}$ is a subset of $\FBA{\AutOrb{A}{q}}$.
    Let $C$ be a block automaton constructed by substituting $B$ for $\OrbAs{}{q}$, then:
    \begin{enumerate}
        \item\label{subst_i1} $C$ is $k$-block
        \item\label{subst_i2} $\FBA{C}$ is a subset of $\FBA{A}$
        \item\label{subst_i3} $C$ is deterministic
        \item\label{subst_i4} $C$ is $\mathcal{L}$-equivalent to $A$
        \item\label{subst_i5} if $B$ passes the \textrm{BW}-test, then so does $C$
        \item\label{subst_i6} if $B$ is compact, then every orbit of $B$ is compact in $C$
    \end{enumerate}
\end{proposition}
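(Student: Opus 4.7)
The plan is to verify the six items in sequence, using the orbit property of $A$ (inherited from the BW-test), Lemma~\ref{lm:lso}, Proposition~\ref{prop:lorb}, and the $\mathcal{L}$-equivalence between $B$ and $\AutOrb{A}{q}$. Items (\ref{subst_i1})--(\ref{subst_i3}) are largely bookkeeping: (\ref{subst_i1}) holds since every label of $\delta_C$ lies in $\delta_A \cup \delta_B$; for (\ref{subst_i2}), the orbit property of $A$ forces all gates of $\OrbAs{}{q}$ to be simultaneously final or non-final, and in either case the outgoing labels of any final state of $C$ trace back to outgoing labels of final states of $A$ via $\FBA{B} \subseteq \FBA{\AutOrb{A}{q}}$ for the internal-$B$ transitions and via $\dout{}{\OrbAs{}{q}}$ for the substitution-out transitions; for (\ref{subst_i3}), the only delicate case is a state $f_B \in F_B$, where one combines the determinism of $B$ with the determinism of $A$ at a representative gate of $\OrbAs{}{q}$ — using the orbit property so that both label sets can be read off a single gate — to conclude that internal-$B$ labels and substitution-out labels are jointly prefix-free.

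For item (\ref{subst_i4}), Lemma~\ref{lm:lso} factors the right language of each state of $\OrbAs{}{q}$ in $A$ as its internal language times the common external factor $\Lout{}{\OrbAs{}{q}}$. The substitution plants in $C$ an automaton reproducing the same internal right languages (by $\mathcal{L}$-equivalence) and the same external factor (the substitution-out transitions mimic $\dout{}{\OrbAs{}{q}}$ at every $f_B \in F_B$, while the portion of $C$ outside $Q_B$ coincides with the portion of $A$ outside $\OrbAs{}{q}$); an induction on accepting runs then extends the right-language equality to the remaining states.

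Item (\ref{subst_i5}) is the main obstacle. I first argue that the orbits of $C$ split cleanly into the orbits of $B$ (inside $Q_B$) and the orbits of $A$ contained in $Q_A \setminus \OrbAs{}{q}$, with no new orbit crossing the boundary: since $\OrbAs{}{q}$ is already a maximal strongly connected component of $A$, no state outside $\OrbAs{}{q}$ can reach it in $A$, hence no state outside $Q_B$ can reach $Q_B$ in $C$. For orbits outside $\OrbAs{}{q}$, the internal transitions are untouched by the substitution (only targets of transitions entering $\OrbAs{}{q}$ are redirected through $h$, and uniformly so), which preserves both the orbit property and the recursive outcome of the BW-test on them. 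For an orbit $O_B$ of $B$, I check that its gates in $C$ are exactly its gates in $B$: every element of $F_B \cap O_B$ is already a gate of $O_B$ in $B$, and the substitution-out transitions are added uniformly at every element of $F_B$. Thus the orbital automaton of $O_B$ in $C$ is literally the orbital automaton of $O_B$ in $B$, and Corollary~\ref{coro:BWt_orbAut} transfers the recursive BW-test from $B$; the orbit property of $O_B$ in $C$ follows from that of $B$ since the uniform block of substitution-out transitions respects the common finality forced by the orbit property of $B$.

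Finally, for item (\ref{subst_i6}), let $p \neq q$ be two states of an orbit $O_B$ of $B$ seen as an orbit of $C$. By $\mathcal{L}$-equivalence of $B$ with $\AutOrb{A}{q}$, there exist $g_p, g_q \in \OrbAs{}{q}$ with $\LsA{B}{p} = \LsA{\AutOrb{A}{q}}{g_p}$ and $\LsA{B}{q} = \LsA{\AutOrb{A}{q}}{g_q}$, and compactness of $B$ forces $\LsA{B}{p} \neq \LsA{B}{q}$. Since the right language of a gate in $\AutOrb{A}{q}$ coincides with its internal language $\Lin{\OrbAs{}{q}}{\cdot}$ inside $A$, applying Proposition~\ref{prop:lorb} to the transverse component $\OrbAs{}{q}$ of $A$ yields $\LsA{A}{g_p} \neq \LsA{A}{g_q}$, and item (\ref{subst_i4}) translates this inequality into $\LsA{C}{p} \neq \LsA{C}{q}$, so $O_B$ is compact in $C$.
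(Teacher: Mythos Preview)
Your overall strategy parallels the paper's, but there is one factual error and one structural omission.

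The error is in item~(\ref{subst_i5}): you claim that ``since $\OrbAs{}{q}$ is already a maximal strongly connected component of $A$, no state outside $\OrbAs{}{q}$ can reach it in $A$, hence no state outside $Q_B$ can reach $Q_B$ in $C$.'' This is false: the in-transitions $\din{}{\OrbAs{}{q}}$ witness reachability of $\OrbAs{}{q}$ from outside (in the running example, $i_A$ reaches the orbit $\{1,3,3',4\}$ via the label $a$). What a strongly connected component guarantees is only that no \emph{cycle} crosses its boundary, i.e.\ that it is a non re-entering component. The conclusion you want---that the orbits of $C$ are exactly the orbits of $B$ together with the orbits of $A$ lying in $Q_A\setminus\OrbAs{}{q}$---does follow, but from the correct premise that $Q_B$ is a non re-entering component of $C$ (checked by projecting a hypothetical boundary-crossing cycle of $C$ back to one in $A$ through $\OrbAs{}{q}$).

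The omission is that you never isolate the fact that $Q_B$ is a \emph{transverse} non re-entering component of $C$. The paper establishes this at the very start of its proof (transversality comes from the uniform substitution-out transitions attached to every $f_B\in F_B$ together with the dichotomy $F_B\subseteq F_C$ or $F_B\cap F_C=\emptyset$), and then items~(\ref{subst_i4}) and~(\ref{subst_i6}) become immediate applications of Lemma~\ref{lm:lso} and Proposition~\ref{prop:lorb} to $Q_B$ inside $C$. Your argument for~(\ref{subst_i4}) tacitly uses this transversality when you invoke ``the same external factor,'' and your argument for~(\ref{subst_i6}) is an unnecessary detour through $A$: once $Q_B$ is known to be transverse in $C$, Proposition~\ref{prop:lorb} applies directly there, since the internal language of each $p\in Q_B$ inside $C$ is exactly $\LsA{B}{p}$, and compactness of $B$ finishes.
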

\begin{proof}
    First, let us notice that since $B$ is substituted for $\OrbAs{A}{q}$ (that is a non re-entering component), then the states of $B$ constitute a non re-entering component in $C$.
    Since the set of out-transitions of $Q_B$ in $C$ is $\{(f_B, b, p) \mid f_B \in F_B \wedge \exists (o, b, p) \in \dout{A}{\OrbAs{}{q}}\}$, and since either $F_B$ is included in $F_C$ or $F_B$ and $F_C$ are disjoint sets, $Q_B$ is transverse in $C$.
    
    (\ref{subst_i1}): Since $A$ and $B$ are both $k$-block, then so is $C$.
    
    (\ref{subst_i2}): The in-transitions (out-transitions) of $Q_B$ in $C$, and the ones of $\OrbAs{}{q}$ in $A$ have the same labels.
    If $F_B$ and $F_C$ are disjoint sets, then $\FBA{C} = \FBA{A}$.
    Otherwise, $F_B$ is included in $F_C$ and since $\FBA{B}$ is a subset of $\FBA{\AutOrb{A}{q}}$, $\FBA{C}$ is a subset of $\FBA{A}$.
    
    (\ref{subst_i3}): Let us suppose that $C$ is not deterministic, then there exists a gate $g$ in $F_B$, an internal transition $(g, w, q)$ of $Q_B$ and an out-transition $(g, w', q')$ of $Q_B$ in $C$ such that $\{w, w'\}$ is not prefix-free.
    Thus, $w$ belongs to $\FBA{B}$, which is included in $\FBA{\AutOrb{A}{q}}$ and $w'$ is the label of an out-transition of $\OrbAs{}{q}$ in $A$.
    This contradicts the determinism of $A$.
    
    (\ref{subst_i4}): Since $B$ and $\AutOrb{A}{q}$ are $\mathcal{L}$-equivalent, the family of right languages of $B$ is equal to the set of internal languages of $\OrbAs{}{q}$ in $A$.
    Moreover, the external language of $Q_B$ in $C$ is the same as the one of $\OrbAs{}{q}$ in $A$.
    Thus, the set of right languages of the states of $B$ in $C$ is the same as the one of the states of $\OrbAs{}{q}$ in $A$.
    Then, by redirecting the in-transitions of $\OrbAs{}{q}$ to any state of $B$ equivalent to a state of $\AutOrb{A}{q}$, $A$ and $C$ are $\mathcal{L}$-equivalent.
    
    (\ref{subst_i5}): The orbital structure of $B$ is preserved in $C$.
    Since $A$ and $B$ have the orbit property, $Q_B$ is transverse in $C$, and the set of in-transitions of $Q_B$ in $C$ is $\{(p, b, h(o)) \mid (p, b, o) \in \din{A}{\OrbAs{}{q}}\}$, the automaton $C$ has the orbit property.
    Since both $A$ and $B$ pass the \textrm{BW}-test, thus $C$ also passes the \textrm{BW}-test.   
        
    (\ref{subst_i6}): $Q_B$ is transverse in $C$ and $C$ is deterministic.
    Thus, following Proposition~\ref{prop:lorb}, if $B$ is compact, then the states of $B$ in $C$ all have distinct right languages.
    Thus, every orbit of $B$ in $C$ is compact.
\end{proof}

    Thus, if we substitute a compact automaton for a non-compact orbit, the automaton we get has one less non-compact orbit than the original one.
    And if we do the same for every non-compact orbit, we can get an equivalent automaton with only compact orbits.

\subsection{Cutting and adding synchronizing transitions}\label{ss:CutAdd}

    In order to compact the extracted orbital automata while preserving the \textrm{BW}-test, we remove their synchronizing transitions to get smaller automata (see Figure~\ref{fig:ExAlgoC}), and put them back after compaction.
    This operation preserves the determinism and the maximal size of the blocks.
    Moreover, as a direct consequence of Theorem~\ref{th:BWtest} and of the definition of the \textrm{BW}-test, coherently adding or removing outgoing transitions for all the final states preserves the \textrm{BW}-test.

\begin{corollary}\label{coro:BWt_cut}
    Let $A$ be an automaton and $(b,s)$ be a pair in $S_A$.
    Then $A$ passes the \textrm{BW}-test if and only if $\AutCut{A}{(b,s)}$ does.
\end{corollary}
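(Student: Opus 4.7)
My plan is to argue directly from the definition of the BW-test, showing that cutting by a consistent pair leaves every step of the test's recursion undisturbed. The $s$-cut $\AutCut{A}{(b,s)}$ removes, for each final state $f$ of $A$, the transition $(f, b, s)$; by $A$-consistency, these transitions are present at every $f \in F_A$, so the removal is coherent across $F_A$.

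I would first establish that $A$ has the orbit property if and only if $\AutCut{A}{(b,s)}$ does. In any orbit $O$ of $A$ containing a final state, transverseness forces every gate of $O$ to be final, so every gate carries the transition $(g, b, s)$ in $A$. Uniform removal at all such gates preserves transverseness: if $s \notin O$, these are out-transitions that disappear uniformly; if $s \in O$, the transitions are internal and disappear uniformly, and if $O$ breaks into smaller SCCs, each piece inherits transverseness by restriction. The converse direction is obtained by reading the same argument backwards, reintroducing the synchronizing transitions uniformly at every final state.

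For the recursive step of the BW-test I would then distinguish two cases. If $s$ lies in no orbit containing a final state, the orbital automata of $A$ and $\AutCut{A}{(b,s)}$ coincide orbit by orbit and the recursive subcomputations are identical. Otherwise, $s$ lies in an orbit $O$ of $A$ that also contains a final state; setting $B = \AutOrb{A}{p}$ for $p \in O$, we have $(b, s) \in \cset{B}$, and the BW-test on $A$ at orbit $O$ may be executed with this choice, recursing on $\AutCut{(\AutOrb{B}{s})}{(b,s)}$. The sub-orbits of this automaton coincide with the sub-orbits of $\AutCut{A}{(b,s)}$ lying inside $O$, which are precisely the orbits that the BW-test on $\AutCut{A}{(b,s)}$ processes at its top level in place of $O$. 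Matching the two recursions then yields the same verdict on both sides.

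The main obstacle is this second case, where the cut splits $O$ into several smaller SCCs: the test on $A$ treats $O$ as a single non-trivial orbit and descends into it, while the test on $\AutCut{A}{(b,s)}$ treats the several pieces at its top level. Identifying the two computations relies on the invariance of the BW-test outcome on an orbital automaton under the choice of consistent pair in step~2(c) and under the choice of initial state, a property implicit in Theorem~\ref{th:BWtest} and the original Brüggemann-Klein--Wood analysis, which carries over to the block setting used here.
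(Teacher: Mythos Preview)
Your plan has a genuine gap in its first step. You claim that $A$ has the orbit property if and only if $\AutCut{A}{(b,s)}$ does, and in particular that when the orbit $O$ containing $s$ breaks into smaller SCCs after the cut, ``each piece inherits transverseness by restriction.'' This is false. Consider an automaton with states $\{s,p,q,r,t\}$, final states $\{p,q\}$, and transitions
\[
(s,a,p),\ (s,c,q),\ (p,b,s),\ (q,b,s),\ (p,e,r),\ (r,f,p),\ (r,g,q),\ (p,d,t),\ (q,d,t).
\]
Here $O=\{s,p,q,r\}$ is a single orbit whose gates are exactly the final states $p,q$; both carry the out-transition $(\,\cdot\,,d,t)$, so $O$ is transverse and $A$ has the orbit property. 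Also $(b,s)\in\cset{A}$. After the cut, the piece $\{p,r\}$ is a non-trivial SCC whose gates are $p$ (final) and $r$ (not final, but with out-transition $(r,g,q)$), so it is \emph{not} transverse. Thus the orbit property alone is not preserved by the cut; in this example $A$ in fact fails the full \textrm{BW}-test (the recursive call on $\AutCut{(\AutOrb{A}{s})}{(b,s)}$ fails for the same reason), so the corollary is not contradicted, but your first step is.

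The fix is already latent in your second paragraph: do not try to establish the orbit property of $\AutCut{A}{(b,s)}$ from the orbit property of $A$ alone. When $s\in O$ and $O$ meets $F_A$, the transverseness of the pieces of $O$ inside $\AutCut{A}{(b,s)}$ is exactly the orbit property of $\AutCut{(\AutOrb{A}{s})}{(b,s)}$, which you only know because $A$ passes the full \textrm{BW}-test and the recursion on $O$ (with the choice $(b,s)$) succeeds. So the two steps you separated must be argued together. The paper, for its part, does not give a detailed argument at all: it simply records the corollary as ``a direct consequence of Theorem~\ref{th:BWtest} and of the definition of the \textrm{BW}-test,'' relying on the choice-independence of the test and the fact that the cut is literally one move of the recursion.
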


\begin{example}
    The automaton $C$ presented in Figure~\ref{fig:ExAlgoC} is obtained by removing the synchronizing transitions $(4, aab, 1)$ and $(4, aaa, 1)$ of $B$ (presented in Figure~\ref{fig:ExAlgoB}).
\end{example}

\begin{figure}[H]
    \centering
    \begin{tikzpicture}
		\node[state, initial] (1) {$1$};
		\node[state, above right of=1] (3) {$3$};
		\node[state, below right of=1] (3') {$3'$};
		\node[state, accepting, below right of=3] (4) {$4$};
		\path[->]
            (1) edge node {$aa$} (3)
            (1) edge [swap] node {$ab$} (3')
            (3) edge node {$a$} (4)
            (3') edge [swap] node {$a$} (4)			
		;
	\end{tikzpicture}
	\caption{The automaton $C$}
    \label{fig:ExAlgoC}
\end{figure}
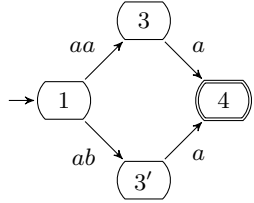

    Let us study the state-equivalence before and after cutting or adding synchronizing transitions.

	The \emph{$(b, s)$-cut language} of a state $p$, denoted by $\Lscut{}{p}{(b, s)}$, is the set of words which send $p$ to a final state without using any transition from $(F \times \{b\} \times \{s\})$.
	Since $(\delta \setminus (F \times \{b\} \times \{s\}))$ is a subset of $\delta$, we have $\Lscut{}{p}{(b, s)} \subset \LsA{}{p}$.
    If an automaton has a consistent label $b$ to a state $s$, the right language of a state can be expressed only with its $(b, s)$-cut language and the right language of $s$.
	
\begin{lemma}\label{lm:lcut}
	Let $A$ be an automaton such that $(b, s)$ belongs to $\cset{A}$.
	Then, for any state $p$ of $A$, $\LsA{}{p} = \Lscut{}{p}{(b, s)} \cdot (\{\varepsilon\} \cup \{b\} \cdot \LsA{}{s}) = \Lscut{}{p}{(b, s)} \cdot (\{b\} \cdot \Lscut{}{s}{(b, s)})^*$.
\end{lemma}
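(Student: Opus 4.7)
The plan is to establish the first equality by a direct path-decomposition argument, then derive the Kleene-star form by specializing the first equality to the state $s$ and applying a standard language-algebraic identity.

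\textbf{First equality.} I would prove $\LsA{}{p} = \Lscut{}{p}{(b, s)} \cdot (\{\varepsilon\} \cup \{b\} \cdot \LsA{}{s})$ by mutual inclusion. For the forward inclusion, take $w \in \LsA{}{p}$ and fix an accepting path $\pi$ from $p$ reading $w$. Two cases: if $\pi$ uses no transition of the form $(f, b, s)$ with $f \in F_A$, then $w \in \Lscut{}{p}{(b,s)}$ and pair it with $\varepsilon$. Otherwise, look at the \emph{first} occurrence on $\pi$ of such a transition and split $w = u \cdot b \cdot v$, where $\pi$ decomposes as $p \xrightarrow{u} f \xrightarrow{b} s \xrightarrow{v} f'$ with $f, f' \in F_A$; by minimality of this occurrence, $u \in \Lscut{}{p}{(b,s)}$, and clearly $v \in \LsA{}{s}$. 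The reverse inclusion is immediate: any witness of $\Lscut{}{p}{(b,s)}$ ends at some final state $f$, and consistency of $b$ provides the edge $(f,b,s)$, so concatenation with a witness of $\LsA{}{s}$ yields an accepting path from $p$.

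\textbf{Second equality.} Instantiate the first equality at $s$ itself to get
\[
\LsA{}{s} \;=\; \Lscut{}{s}{(b,s)} \,\cup\, \Lscut{}{s}{(b,s)} \cdot \{b\} \cdot \LsA{}{s}.
\]
Because $\varepsilon \notin \Lscut{}{s}{(b,s)} \cdot \{b\}$, Arden's lemma applied to this language equation yields the unique solution $\LsA{}{s} = (\Lscut{}{s}{(b,s)} \cdot \{b\})^{*} \cdot \Lscut{}{s}{(b,s)}$. Plugging this back into the first equality and using the elementary star identity $(XY)^{*} = \{\varepsilon\} \cup X \cdot (YX)^{*} \cdot Y$ with $X = \{b\}$ and $Y = \Lscut{}{s}{(b,s)}$ collapses $\{\varepsilon\} \cup \{b\} \cdot \LsA{}{s}$ to $(\{b\} \cdot \Lscut{}{s}{(b,s)})^{*}$, which is exactly the desired form.

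\textbf{Main obstacle.} The only delicate step is the "first-use" decomposition in the forward inclusion of the first equality: one must invoke the consistency hypothesis $(b,s) \in \cset{A}$ to guarantee that every synchronizing transition out of a final state lands at $s$, so that the intermediate vertex after reading $b$ is unambiguously $s$ and the suffix $v$ belongs to $\LsA{}{s}$ rather than to the right language of some other state. Once this bookkeeping is pinned down, the remainder is routine language algebra.
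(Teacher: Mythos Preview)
Your proof is correct and follows essentially the same route as the paper: a first-use decomposition to obtain $\LsA{}{p} = \Lscut{}{p}{(b,s)} \cup \Lscut{}{p}{(b,s)}\cdot\{b\}\cdot\LsA{}{s}$, specialization to $p=s$, Arden's lemma, and the standard star identity. The only inaccuracy is in your ``Main obstacle'' commentary: consistency is not what pins down the intermediate vertex in the forward inclusion (that vertex is $s$ simply because you are looking at the first transition drawn from the set $F\times\{b\}\times\{s\}$); rather, consistency is needed in the \emph{reverse} inclusion, to guarantee that the edge $(f,b,s)$ exists for whichever final state $f$ the cut-language witness happens to end at.
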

\begin{proof}
    By definition of the $(b, s)$-cut language, we have	$\LsA{}{p} = \Lscut{}{p}{(b, s)} \cup \Lscut{}{p}{(b, s)} \cdot \{b\} \cdot \LsA{}{s})$, which gives us $\LsA{}{s} = \Lscut{}{s}{(b, s)} \cup \Lscut{}{s}{(b, s)} \cdot \{b\} \cdot \LsA{}{s})$.
	From Arden Lemma~\cite{Ard61}, we can deduce that:
	\begin{align*}
		\LsA{}{s} &= (\Lscut{}{s}{(b, s)} \cdot \{b\})^* \cdot \Lscut{}{s}{(b, s)}\\
		&= \Lscut{}{s}{(b, s)} \cdot (\{b\} \cdot \Lscut{}{s}{(b, s)})^*
	\end{align*}
	Thus, for every state $p$ of $A$, we have:
	\begin{align*}
		\LsA{}{p} &= \Lscut{}{p}{(b, s)} \cdot (\{\varepsilon\} \cup \{b\} \cdot \LsA{}{s})\\
		&= \Lscut{}{p}{(b, s)} \cdot (\{\varepsilon\} \cup \{b\} \cdot \Lscut{}{s}{(b, s)} \cdot (\{b\} \cdot \Lscut{}{s}{(b, s)})^*)\\
		&= \Lscut{}{p}{(b, s)} \cdot (\{b\} \cdot \Lscut{}{s}{(b, s)})^*
	\end{align*}	
\end{proof}

\begin{proposition}\label{prop:lcut}
	Let $A$ be a deterministic block automaton such that $(b, s)$ belongs to $\cset{A}$.
	Then, for any two states $p$ and $q$ of $A$, we have $\LsA{}{p} = \LsA{}{q}$ if and only if $\Lscut{}{p}{(b, s)} = \Lscut{}{q}{(b, s)}$.
\end{proposition}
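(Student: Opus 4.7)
The plan is to split the proof into the two implications. The forward direction, that equal cut languages imply equal right languages, is immediate from Lemma~\ref{lm:lcut}: both sides expand as $\Lscut{}{p}{(b,s)} \cdot (\{b\} \cdot \Lscut{}{s}{(b,s)})^*$ and $\Lscut{}{q}{(b,s)} \cdot (\{b\} \cdot \Lscut{}{s}{(b,s)})^*$ respectively, and these coincide as soon as $\Lscut{}{p}{(b,s)} = \Lscut{}{q}{(b,s)}$.

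For the converse I would argue by contradiction with a shortest counterexample, in the same spirit as the proof of Proposition~\ref{prop:lorb}. Assume $\LsA{}{p} = \LsA{}{q}$ but $\Lscut{}{p}{(b,s)} \neq \Lscut{}{q}{(b,s)}$, and pick a shortest word $w$ in their symmetric difference; without loss of generality $w \in \Lscut{}{p}{(b,s)} \setminus \Lscut{}{q}{(b,s)}$. Since $w \in \LsA{}{q}$, Lemma~\ref{lm:lcut} yields a factorization $w = u_0 \cdot b \cdot v'$ with $u_0 \in \Lscut{}{q}{(b,s)}$ and $v' \in \LsA{}{s}$; the explicit occurrence of $b$ is forced by $w \notin \Lscut{}{q}{(b,s)}$, so $|u_0| < |w|$. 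By minimality of $w$, $u_0$ also belongs to $\Lscut{}{p}{(b,s)}$, so there is a path from $p$ to some final state $f_p$ with label $u_0$ and using no transition of the form $(f, b, s)$ with $f \in F$. Consistency then provides the transition $(f_p, b, s)$, which combined with a realization of $v' \in \LsA{}{s}$ from $s$ yields a path $(p, w, f') \in \delta_A^*$ that does cross the synchronizing transition $(f_p, b, s)$.

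The contradiction is then to be obtained by comparing this constructed path with a witness for $w \in \Lscut{}{p}{(b,s)}$, namely another path from $p$ of label $w$ avoiding every synchronizing transition. I expect the main obstacle to be justifying that these two paths must in fact coincide, since this is the step in which determinism is used in an essential way. In a deterministic block automaton, the labels of the outgoing transitions of a given state form a prefix-free set, so at most one of them can be a prefix of any given word; by induction on the length, the sequence of transitions starting at a fixed state and reading a prescribed word is uniquely determined. Applied to $p$ and $w$, this uniqueness forces the two paths above to be identical, contradicting the fact that one uses $(f_p, b, s)$ while the other avoids every synchronizing transition. This yields the required equality $\Lscut{}{p}{(b,s)} = \Lscut{}{q}{(b,s)}$.
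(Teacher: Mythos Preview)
Your proof is correct and follows essentially the same approach as the paper's: both directions are handled identically up to the point where, having obtained $u_0 \in \Lscut{}{p}{(b,s)}$ by minimality, a contradiction with determinism is derived. The only difference is packaging: the paper localizes the contradiction at the single state reached from $p$ after reading $u_0$, exhibiting two outgoing transitions whose labels are comparable prefixes of $bv'$, whereas you invoke global uniqueness of the accepting path labelled $w$ from $p$; the two arguments are interchangeable and your induction on the length using prefix-freeness of outgoing labels is exactly what makes the paper's local step legitimate.
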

\begin{proof}
	Let us suppose that $\Lscut{}{p}{(b, s)} = \Lscut{}{q}{(b, s)}$.
	Since $(b, s) \in \cset{A}$, then, from Lemma~\ref{lm:lcut}, $\LsA{}{p} = \Lscut{}{p}{(b, s)} \cdot (\varepsilon + b \cdot \LsA{}{s}) = \Lscut{}{q}{(b, s)} \cdot (\varepsilon + b \cdot \LsA{}{s}) = \LsA{}{q}$.
	
	Let us suppose that $\LsA{}{p} = \LsA{}{q}$ and $\Lscut{}{p}{(b, s)} \neq \Lscut{}{q}{(b, s)}$.
	Then, there exists a shortest word $w$ in $\Lscut{}{p}{(b, s)}$ but not in $\Lscut{}{q}{(b, s)}$.
	Thus $w$ is in $\LsA{}{p} = \LsA{}{q}$.
	Then, from Lemma \ref{lm:lcut}, $w$ is in $\Lscut{}{q}{(b, s)} \cdot \{b\} \cdot \LsA{}{s}$.
	Consequently, there exists two words $u, v$ over $\Sigma$ such that $w = ubv$ and $u$ is in $\Lscut{}{q}{(b, s)}$.
	Since $|u| < |w|$ and $w$ is a shortest word, $u$ is in $\Lscut{}{p}{(b, s)}$.
	Since $w$ is in $\Lscut{}{p}{(b, s)}$, there exists a final state $f$ of $A$ such that $(p, u, f)$ belongs to $\delta^*$ and $(f, a, r)$ belongs to $\delta$, with $a$ a prefix of $bv$ such that if $a = b$ then $r \neq s$.
	However, since $(b, s)$ belongs to $\cset{A}$, $(f, b, s)$ belongs to $\delta$.
	Since either $a$ is a prefix of $b$ or $b$ is a prefix of $a$, this contradicts the determinism of $A$.
\end{proof}	

\begin{corollary}\label{coro:lcut_compact}
    Let $A$ be a deterministic block automaton such that $(b, s) \in \cset{A}$.
    Then, $\AutCut{A}{(b,s)}$ is compact if and only if $A$ is compact.
\end{corollary}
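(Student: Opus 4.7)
The plan is to observe that Corollary~\ref{coro:lcut_compact} follows almost immediately from Proposition~\ref{prop:lcut} once one identifies the right language of a state in $\AutCut{A}{(b,s)}$ with the $(b,s)$-cut language of that state in $A$. So the structure is: (i) spell out this identification, (ii) rephrase compactness of each automaton in terms of pairwise distinctness of the relevant languages, (iii) invoke Proposition~\ref{prop:lcut} (or rather its contrapositive) to transfer the property.

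More concretely, I would first note that by the definition of the $s$-cut, $\AutCut{A}{(b,s)}$ has the same state set, same initial and final states as $A$, and its transition set is obtained from $\delta_A$ by removing exactly the transitions $(f,b,s)$ with $f \in F_A$. Consequently, a word $w$ belongs to the right language of $p$ in $\AutCut{A}{(b,s)}$ if and only if $w$ sends $p$ to a final state without using any transition in $F_A \times \{b\} \times \{s\}$; this is exactly the definition of $\Lscut{A}{p}{(b,s)}$. Hence the family of right languages of $\AutCut{A}{(b,s)}$ is precisely $\{\Lscut{A}{p}{(b,s)} \mid p \in Q_A\}$.

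Next, I would unfold the definition of compactness on each side: $A$ is compact iff for every pair of distinct states $p, q \in Q_A$ we have $\LsA{A}{p} \neq \LsA{A}{q}$, and $\AutCut{A}{(b,s)}$ is compact iff for every such pair we have $\Lscut{A}{p}{(b,s)} \neq \Lscut{A}{q}{(b,s)}$. Since $(b,s) \in \cset{A}$ and $A$ is deterministic, Proposition~\ref{prop:lcut} gives, for each pair $(p,q)$, the equivalence $\LsA{A}{p} = \LsA{A}{q} \Longleftrightarrow \Lscut{A}{p}{(b,s)} = \Lscut{A}{q}{(b,s)}$. Taking contrapositives and quantifying universally over all distinct pairs $(p,q)$ yields the desired biconditional.

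There is essentially no obstacle here, as Proposition~\ref{prop:lcut} does all the real work; the only thing to be careful about is to state explicitly that the state set and the final states are preserved by the cut, so that the notion of compactness is being compared on the same set of states, and to notice the one-line identification between the right language in the cut automaton and the $(b,s)$-cut language in $A$. Everything else is a direct quantifier manipulation.
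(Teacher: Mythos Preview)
Your proposal is correct and follows exactly the approach the paper intends: the corollary is stated without proof in the paper, as an immediate consequence of Proposition~\ref{prop:lcut}, and your argument simply makes explicit the identification $\LsA{\AutCut{A}{(b,s)}}{p} = \Lscut{A}{p}{(b,s)}$ together with the fact that the cut preserves the state set and final states. There is nothing to add.
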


    Now, let us study the reverse of the $S$-cut operation.

    Let $A$ be a block automaton.
    A block $w$ is \emph{$A$-vacant} if there exists a state $q_w$ of $A$ such that every final state of $A$ has no outgoing transition labelled by $w$ to $q_w$.
	The \emph{set of vacancy of $A$} is denoted by $\vset{A} = \{(w, q_w) \mid \forall f \in F, (f, w, q_w) \notin \delta\}$.	
	Let $v = (w, q_w)$ be an element of $\vset{A}$. 
	The $v$\emph{-add} $\AutAdd{A}{v}$ of $A$ is constructed from $A$ by adding for each final state $f$ of $A$, the transition $(f, w, q_w)$.
	It is naturally extended to a subset of $\vset{A}$.

    Notice that if $(a,p)$ belongs to $\cset{A}$, then $\AutAdd{(\AutCut{A}{(a,p)})}{(a,p)} = A$, and symmetrically, if $(b,q)$ belongs to $\vset{A}$, then $\AutCut{(\AutAdd{A}{(b,q)})}{(b,q)} = A$.

\begin{example}
    The automaton $C'$ of Figure~\ref{fig:ExAlgoC'} is $\mathcal{L}$-equivalent to the automaton $C = \AutCut{B}{\{(aab,1), (aaa,1)\}}$ of Figure~\ref{fig:ExAlgoC}. 
    Then, the automaton $B'$ of Figure~\ref{fig:ExAlgoB'} is obtained from $C'$ by adding the synchronization transitions labelled by $aab$ and $aaa$ to the state $1$.
\end{example}

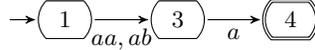
\begin{figure}[H]
    \centering
    \begin{tikzpicture}
		\node[state, initial] (1) {$1$};
		\node[state, right of=1] (3) {$3$};
		\node[state, accepting, right of=3] (4) {$4$};
		\path[->]
            (1) edge [swap] node {$aa, ab$} (3)
            (3) edge [swap] node {$a$} (4)
		;
	\end{tikzpicture}
	\caption{The automaton $C'$}
    \label{fig:ExAlgoC'}
\end{figure}

\begin{proposition}\label{prop:vadd_general}
    Let $A$ be a deterministic $k$-block automaton such that $(b,s)$ belongs to $\cset{A}$.
    Let $B$ be a deterministic $k$-block automaton $\mathcal{L}$-equivalent to $\AutCut{A}{(b, s)}$ such that $\FBA{B}$ is a subset of $\FBA{\AutCut{A}{(b,s)}}$, and $s'$ be a state of $B$ equivalent to $s$ in $\AutCut{A}{(b,s)}$.
    Then:
    \begin{enumerate}
        \item\label{vadd_i1} $\AutAdd{B}{(b,s')}$ is $k$-block
        \item\label{vadd_i2} $\FBA{\AutAdd{B}{(b,s)}}$ is a subset of $\FBA{A}$        
        \item\label{vadd_i3} $\AutAdd{B}{(b,s')}$ is deterministic
        \item\label{vadd_i4} $\AutAdd{B}{(b,s')}$ is $\mathcal{L}$-equivalent to $A$
        \item\label{vadd_i5} if $B$ passes the \textrm{BW}-test, then so does $\AutAdd{B}{(b,s')}$
        \item\label{vadd_i6} if $B$ is compact, then so is $\AutAdd{B}{(b,s')}$
    \end{enumerate}   
\end{proposition}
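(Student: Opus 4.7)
The plan is to view $\AutAdd{B}{(b,s')}$ as the exact inverse of the $(b,s)$-cut operation applied to $A$, using the identity $\AutCut{(\AutAdd{B}{(b,s')})}{(b,s')} = B$ together with the consistency $(b,s') \in \cset{\AutAdd{B}{(b,s')}}$ that the add produces by construction. This lets me transport all the desired properties between $A$ and $\AutAdd{B}{(b,s')}$ through their common cut $B \simeq_{\mathcal{L}} \AutCut{A}{(b,s)}$.

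First I would establish a key preliminary: $b \notin \FBA{\AutCut{A}{(b,s)}}$. Indeed, any out-transition of a final state of $\AutCut{A}{(b,s)}$ comes from an out-transition of a final state of $A$ that is not $(f,b,s)$; since every final state of $A$ has the $b$-transition to $s$ and $A$ is deterministic, any other out-transition must carry a label prefix-incomparable with $b$. Consequently $\FBA{B} \subseteq \FBA{\AutCut{A}{(b,s)}}$ consists entirely of labels prefix-incomparable with $b$, so in particular $b \notin \FBA{B}$, meaning $(b,s') \in \vset{B}$ and the add is well-defined. Items (\ref{vadd_i1}) and (\ref{vadd_i2}) are then immediate bookkeeping: the only new label is $b$, whose length is at most $k$ and which already lies in $\FBA{A}$ by consistency of $(b,s)$. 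Item (\ref{vadd_i3}) uses precisely the preliminary observation: the only possible determinism violations introduced involve the new $b$-transitions at final states of $B$, and these cannot conflict with any pre-existing out-transition since all existing labels at those states are prefix-incomparable with $b$.

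For item (\ref{vadd_i4}), I apply Lemma~\ref{lm:lcut} on both automata. On the $A$ side, since $(b,s) \in \cset{A}$ and $\AutCut{A}{(b,s)}$ carries its right languages, for each state $p$ of $A$,
$$\LsA{A}{p} = \LsA{\AutCut{A}{(b,s)}}{p} \cdot (\{b\} \cdot \LsA{\AutCut{A}{(b,s)}}{s})^*.$$
On the $\AutAdd{B}{(b,s')}$ side, $(b,s') \in \cset{\AutAdd{B}{(b,s')}}$ and $\AutCut{(\AutAdd{B}{(b,s')})}{(b,s')} = B$, so for each state $p$ of $B$,
$$\LsA{\AutAdd{B}{(b,s')}}{p} = \LsA{B}{p} \cdot (\{b\} \cdot \LsA{B}{s'})^*.$$
The $\mathcal{L}$-equivalence of $B$ and $\AutCut{A}{(b,s)}$ matches each right language $\LsA{B}{p}$ to some $\LsA{\AutCut{A}{(b,s)}}{p'}$, and the hypothesis that $s'$ is equivalent to $s$ gives $\LsA{B}{s'} = \LsA{\AutCut{A}{(b,s)}}{s}$. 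Substituting these equalities into the two displays identifies $\FRL{\AutAdd{B}{(b,s')}}$ with $\FRL{A}$, and the same computation applied to the initial states (which are unchanged by the add) gives $L_{\AutAdd{B}{(b,s')}} = L_A$.

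Items (\ref{vadd_i5}) and (\ref{vadd_i6}) are then one-line applications of the earlier corollaries, using item (\ref{vadd_i3}) so that the cut/compact machinery is available for $\AutAdd{B}{(b,s')}$: Corollary~\ref{coro:BWt_cut} applied to the pair $(b,s') \in \cset{\AutAdd{B}{(b,s')}}$ gives that $\AutAdd{B}{(b,s')}$ passes the \textrm{BW}-test iff $B$ does, and Corollary~\ref{coro:lcut_compact} gives that $\AutAdd{B}{(b,s')}$ is compact iff $B$ is. The main obstacle throughout is the \emph{$\mathcal{L}$-equivalence} in item (\ref{vadd_i4}); everything else is essentially a matter of tracking labels and invoking the cut/add identity, but item (\ref{vadd_i4}) is what forces the careful interplay between Lemma~\ref{lm:lcut} on both sides and the hypothesis that $s'$ is equivalent to $s$.
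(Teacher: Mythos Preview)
Your proposal is correct and follows essentially the same approach as the paper's proof: both arguments establish that $b$ is prefix-incomparable with every label in $\FBA{\AutCut{A}{(b,s)}}$ (hence in $\FBA{B}$) from the determinism of $A$, use this for items~(\ref{vadd_i1})--(\ref{vadd_i3}) and for the identity $\AutCut{(\AutAdd{B}{(b,s')})}{(b,s')} = B$, and then derive item~(\ref{vadd_i4}) by applying Lemma~\ref{lm:lcut} on both the $A$ side and the $\AutAdd{B}{(b,s')}$ side and matching right languages via the $\mathcal{L}$-equivalence and the hypothesis $\LsA{B}{s'} = \LsA{\AutCut{A}{(b,s)}}{s}$. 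The only organizational difference is that you front-load the prefix-incomparability observation, whereas the paper distributes it across the proofs of items~(\ref{vadd_i3}) and~(\ref{vadd_i5})--(\ref{vadd_i6}).
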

\begin{proof}
    (\ref{vadd_i1}): Since $A$ is $k$-block, $b$ is of length at most $k$.
    Moreover, since $B$ is $k$-block, $\AutAdd{B}{(b,s')}$ is also $k$-block.
    
    (\ref{vadd_i2}): Since $\FBA{\AutAdd{B}{(b,s)}} = \FBA{B} \cup \{b\}$, $\FBA{A} = \FBA{\AutCut{A}{(b,s)}} \cup \{b\}$, and $\FBA{B}$ is a subset of $\FBA{\AutCut{A}{(b,s)}}$, $\FBA{\AutAdd{B}{(b,s)}}$ is a subset of $\FBA{A}$.
    
    (\ref{vadd_i3}): Since $A$ is deterministic and $b$ is $A$-consistent, for any word $w$ of $\FBA{A} \setminus \{b\}$, the set $\{b, w\}$ is prefix-free.
    This still holds when $w$ is in $\FBA{B} \subset \FBA{\AutCut{A}{(b,s)}}$.
    Thus, $\AutAdd{B}{(b,s')}$ is deterministic.
    
    (\ref{vadd_i4}): Let $p$ be a state of $B$.
    Following Lemma~\ref{lm:lcut}, we have $\LsA{\AutAdd{B}{(b,s')}}{p} = \LsA{B}{p} \cdot (\{b\} \cdot \LsA{B}{s'})^*$.
    Since $B$ is $\mathcal{L}$-equivalent to $\AutCut{A}{(b,s)}$, there exists a state $p'$ in $\AutCut{A}{(b,s)}$ equivalent to $p$ in $B$.
    Thus, this state also belongs to $A$ and $\LsA{A}{p'} = \LsA{\AutCut{A}{(b,s)}}{p'} \cdot (\{b\} \cdot \LsA{\AutCut{A}{(b,s)}}{s})^*$.
    Since $s'$ in $B$ is equivalent to $s$ in $\AutCut{A}{(b,s)}$, we have $\LsA{A}{p'} = \LsA{B}{p} \cdot (\{b\} \cdot \LsA{B}{s'})^*$, and thus $\LsA{A}{p'} = \LsA{\AutAdd{B}{(b,s')}}{p}$.
    Moreover, for any state $q$ of $\AutCut{A}{(b,s)}$, there is a state $q'$ in $B$ equivalent to $q$ in $\AutCut{A}{(b,s)}$.
    Similarly, we can conclude that $\LsA{\AutAdd{B}{(b,s')}}{q'} = \LsA{A}{q}$.
    Consequently, $\AutAdd{B}{(b,s')}$ is $\mathcal{L}$-equivalent to $A$. 
    
    (\ref{vadd_i5}, \ref{vadd_i6})): Since $A$ is deterministic, $b$ does not belong to $\FBA{\AutCut{A}{(b,s)}}$, which means that $b$ does not belong to $\FBA{B}$ and $(b,q)$ belongs to $\vset{B}$.
    Thus $B = \AutCut{(\AutAdd{B}{(b,s')})}{(b,s')}$.
    Following Corollary~\ref{coro:BWt_cut}, if $B$ passes the \textrm{BW}-test, then so does $\AutAdd{B}{(b,s')}$.
    Finally, following (\ref{vadd_i3}) and Corollary~\ref{coro:lcut_compact}, if $B$ is compact, then so is $\AutAdd{B}{(b,s')}$. 
\end{proof}

\subsection{Eliminating unnecessary orbits}

	Several orbits may have the same $\Omega$-image.
	Thus, selecting only one representative orbit for each $\Omega$-image, we can compute an equivalent block automaton by linking them together.			    
    As any orbit admits a maximal orbit with the same $\Omega$-image, the selected representative ones are maximal.
    Moreover, if every orbit is compact, then the resulting automaton is compact.

\subsubsection{Completing maximal orbits}\
 
    The selected maximal orbits have to be linked together, but some of them may have out-transitions to a state which has no equivalent in any other selected orbit.
    Thus, these missing states have to be added as bridges between the selected orbits.

    A completion of a maximal orbit $O$ consists in adding states equivalent to some of $\CQ{O}$.
    To this end, we compute their transitions directly from $O$.
    Reading a word from a state may follow a path which ends up in the middle of some transitions.
    Thus, we define a function to complete the labels to reach their destination states.\\

    Let $A$ be a residual deterministic block automaton $A$. 
    The \emph{pathway} function $\Delta : Q \times \Sigma^* \rightarrow 2^{(\Sigma^* \times Q)}$ of $A$ is defined by $\Delta(p, u) = \{(v, q) \mid (p, uv, q) \in \delta^* \wedge \forall w \in \mathrm{Pref}(v) \setminus \{v\}, \delta^*(p, uw) = \emptyset\}$.
	In particular, $\Delta(p, w) = \{(\varepsilon, q)\}$ if and only if $(p, w, q)$ belongs to $\delta^*$.

\begin{example}
    Let us consider the state $4$ of the automaton $A'$ in Figure~\ref{fig:ExAlgoA'}: we have $\Delta(4, \varepsilon) = \{(\varepsilon, 4)\}$, $\Delta(4, a) = \{(aa, 1), (ab, 1), (ac, f_1), (b, f_1)\}$, $\Delta(4, aa) = \{(a, 1), (b, 1), (c, f_1)\}$, $\Delta(4, aaa) = \{(\varepsilon, 1)\}$ and $\Delta(4, aaaa) = \{(a, 3), (b, 3)\}$.
\end{example}
		
\begin{lemma}\label{lemma:langage_completion}
    Let $p$ be a state of $A$.
    If $(\Phi(p), w, s)$ belongs to $\delta_M^*$, then $\LsA{}{s} = \bigcup_{(u,r) \in \Delta(p,w)} u \cdot \LsA{}{r}$.
\end{lemma}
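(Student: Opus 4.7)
The plan is to reduce the statement to a reformulation of $L(s)$ in terms of $L(p)$ and then decompose accepting paths from $p$ into their block structure.

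First I would observe that, since $A$ is residual, the state $p$ is equivalent to $\Phi(p)$ in $M$, so $\LsA{}{p} = \LsA{M}{\Phi(p)}$. Because $M$ is deterministic and $(\Phi(p), w, s) \in \delta_M^*$, reading $w$ from $\Phi(p)$ reaches the unique state $s$, and therefore $\LsA{}{s} = \{v \in \Sigma^* \mid wv \in \LsA{}{p}\}$. The lemma thus reduces to showing
\[
\{v \mid wv \in \LsA{}{p}\} \;=\; \bigcup_{(u,r)\in \Delta(p,w)} u\cdot \LsA{}{r}.
\]

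For the $\supseteq$ direction, I take $(u,r) \in \Delta(p,w)$ and $v'\in \LsA{}{r}$. By definition of $\Delta$, $(p,wu,r)\in \delta_A^*$, and by definition of $\LsA{}{r}$ there is a final state $f$ with $(r,v',f)\in \delta_A^*$. Concatenating the two paths gives $wuv'\in \LsA{}{p}$, so $uv'$ lies in the left-hand side.

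For the $\subseteq$ direction, take $v$ with $wv\in \LsA{}{p}$; fix an accepting path that decomposes $wv$ as a concatenation of blocks $b_1\cdots b_n$ along states $p=q_0,q_1,\dots,q_n = f$. Either $w=b_1\cdots b_i$ ends exactly on a block boundary (take $u=\varepsilon$ and $r=q_i$; the minimality condition on $\Delta$ is vacuously true), or $w=b_1\cdots b_i\,x$ where $x$ is a proper prefix of $b_{i+1}=xy$ (take $u=y$ and $r=q_{i+1}$, so $v=yv''$ with $v''=b_{i+2}\cdots b_n\in \LsA{}{q_{i+1}}$). In both cases $(u,r)\in \Delta(p,w)$ once the minimality clause is checked, which is the only delicate point. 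The main obstacle is exactly this verification: for every proper prefix $y'$ of $y$, one must show $\delta_A^*(p,wy')=\emptyset$. I would argue by contradiction, assuming some block sequence out of $p$ spells $wy'$. Truncating this sequence after $b_1\cdots b_i$ and renaming the remaining portion $b'$, the word $xy'$ must be a prefix of some block $b'$ going out of $q_i$, or $b'$ a prefix of $xy'$; in either case $b'$ and $b_{i+1}$ share a prefix with one properly extending the other, contradicting the prefix-freeness that comes from the determinism of $A$. This is the key step; the remaining set-theoretic manipulations are routine.
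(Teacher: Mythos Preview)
Your argument is correct and is precisely the unpacking of what the paper means by ``straightforward from the definition of $\Delta$'' --- the paper gives no further proof. One small point of phrasing in your contradiction step: you should state explicitly that block determinism forces any path from $p$ spelling $wy'$ to coincide with the original path on its first $i$ blocks (so it passes through $q_i$), after which the \emph{first} block $b'$ taken out of $q_i$ is necessarily a prefix of $xy'$, hence a proper prefix of $b_{i+1}$; your alternative ``$xy'$ a prefix of $b'$'' does not arise, since a completed path to a state requires $b'$ to be fully consumed.
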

\begin{proof}
    Straightforward from the definition of $\Delta$.
\end{proof}

    Let $M$ be the minimal DFA of $A$.
    A completion of $O$ in $A$ with respect to $s$ in $\CQ{O}$ consists in selecting a state $o$ in $O$ and a word $w$ such that $(\Phi(o), w, s)$ belongs to $\delta_M^*$ to create an automaton $B = (\Sigma_A, Q, I_A, F_A, \delta)$ with $Q = Q_A \cup \{s_A\}$ and $\delta = \delta_A \cup \{(s_A, u, r) \mid (u, r) \in \Delta(o, w)\}$.

    Notice that two different choices of $o$ and $w$ may end up with different resulting automata.
    Moreover, $\{s_A\}$ is a trivial and since $O$ is maximal, $O \cup \{s_A\}$ is a non re-entering component.

\begin{lemma}\label{lm:state_eq_comp}
	The state $s_A$ of $B$ is equivalent to $s$.
\end{lemma}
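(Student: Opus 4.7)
The plan is to unfold the definition of the right language of $s_A$ in $B$ directly from the construction, and then invoke Lemma~\ref{lemma:langage_completion} to identify it with $\LsA{M}{s}$.

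First, I would compute $\LsA{B}{s_A}$ from the construction of $B$. By definition the only transitions added are those of the form $(s_A, u, r)$ for $(u, r) \in \Delta(o, w)$, and $s_A$ is not declared final. Moreover, since $s_A \notin Q_A$, for each such $r$ we have $r \in Q_A$, and reading any word starting in $s_A$ consists in taking one new transition and then continuing inside $A$. This gives the identity $\LsA{B}{s_A} = \bigcup_{(u, r) \in \Delta(o, w)} u \cdot \LsA{B}{r}$.

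Second, I would verify that the insertion of $s_A$ does not disturb the right languages of the old states. No transition of $B$ enters $Q_A$ from $s_A$ in the sense of a cycle: $s_A$ forms a trivial orbit, and $O \cup \{s_A\}$ is non re-entering because $O$ is maximal, so no state of $Q_A$ can reach $s_A$ in $B$. Hence $\LsA{B}{r} = \LsA{A}{r}$ for every $r \in Q_A$, and in particular the previous identity becomes $\LsA{B}{s_A} = \bigcup_{(u, r) \in \Delta(o, w)} u \cdot \LsA{A}{r}$.

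Third, I would apply Lemma~\ref{lemma:langage_completion} to the state $o$ and the word $w$, which, since $(\Phi(o), w, s) \in \delta_M^*$ by hypothesis, yields $\LsA{M}{s} = \bigcup_{(u, r) \in \Delta(o, w)} u \cdot \LsA{A}{r}$. Combining the two displayed equalities gives $\LsA{B}{s_A} = \LsA{M}{s}$, so $s_A$ and $s$ are equivalent. I do not anticipate a real obstacle here; the only care needed is to justify that adding the trivial terminal orbit $\{s_A\}$ does not alter the right languages of the states already in $A$, which follows from the maximality of $O$.
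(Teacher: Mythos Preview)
Your proof is correct and follows essentially the same route as the paper's: both invoke Lemma~\ref{lemma:langage_completion} after noting that $s_A$ is non-final and that the right languages of the old states are unchanged (the latter is even simpler than you argue---by construction $s_A$ has no incoming transition, so maximality of $O$ is not needed there). The paper additionally records that $\Delta(o,w)\neq\{(\varepsilon,r)\}$, using $s\in\CQ{O}$ and the maximality of $O$, but that observation concerns the well-formedness of the added transitions rather than the equivalence claim itself.
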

\begin{proof}
    Let $p$ be a state of $O$ and $w$ be a word such that $(\Phi(p), w, s)$ belongs to $\delta_M^*$.
    Since $s$ belongs to $\CQ{O}$ and $O$ is maximal, there is no state in $A$ equivalent to $s$ that can be reached from a state of $O$.   
    Thus, $\Delta(p, w) \neq \{(\varepsilon, r)\}$ and since $s_A$ is not final, following Lemma~\ref{lemma:langage_completion}, $s_A$ is equivalent to $s$.
\end{proof}

   Conditions needed that are satisfied by the completion of a maximal orbit are as follow:
    
\begin{proposition}\label{prop:completion}
    Let $M$ be the minimal DFA of a residual deterministic $k$-block automaton $A$.
    Let $O$ be a maximal orbit of $A$ and $B$ be the completed of $O$ in $A$ with respect to $q$ such that $q \in \Phi(Q_A)$. 
    Then:
    \begin{enumerate}
        \item\label{compl_i1} $B$ is $\mathcal{L}$-equivalent to $A$
        \item\label{compl_i2} $\FBA{B} = \FBA{A}$        
		\item\label{compl_i3} $B$ is deterministic
        \item\label{compl_i4} $B$ is $k$-block
        \item\label{compl_i5} if $A$ passes the \textrm{BW}-test, then so does $B$
    \end{enumerate}
\end{proposition}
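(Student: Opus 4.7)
The plan is to verify claims (\ref{compl_i1})--(\ref{compl_i5}) in turn, leaning on Lemma~\ref{lm:state_eq_comp}, the defining condition of the pathway function $\Delta$, and the structural fact that the completion introduces a single new state $s_A$ equipped only with outgoing transitions, while $s_A$ is neither initial nor final, receives no incoming transition, and carries no self-loop.

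I would first handle (\ref{compl_i1}) and (\ref{compl_i2}) together. By Lemma~\ref{lm:state_eq_comp}, $\LsA{B}{s_A} = \LsA{M}{q}$. The hypothesis $q \in \Phi(Q_A)$ supplies a state $p \in Q_A$ with $\LsA{A}{p} = \LsA{M}{q}$, and since no existing transition is altered, $\LsA{B}{p} = \LsA{A}{p}$. Hence $\LsA{B}{s_A}$ already belongs to $\FRL{A}$, giving $\FRL{B} = \FRL{A}$. Moreover, $s_A$ is unreachable from $I_B = I_A$ because no transition targets it, so $L_B = L_A$, which combined with the previous equality of families of right languages yields $\mathcal{L}$-equivalence. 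As $s_A$ is not final and every new transition originates from $s_A$, no out-transition of a final state of $B$ acquires a new label, so $\FBA{B} = \FBA{A}$.

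For (\ref{compl_i3}) and (\ref{compl_i4}), the only determinism obligation not already met by $A$ concerns the labels $\{v \mid (v, r) \in \Delta(o, w)\}$ leaving $s_A$. If, for two distinct pairs $(v_1, r_1), (v_2, r_2) \in \Delta(o, w)$, $v_1$ were a proper prefix of $v_2$, the defining clause of $\Delta$ applied to $(v_2, r_2)$ would force $\delta_A^*(o, w v_1) = \emptyset$, contradicting $r_1 \in \delta_A^*(o, w v_1)$; hence these labels are pairwise prefix-free. For $k$-blockness, I plan to trace a path in $A$ from $o$ reading $w v$ to $r$ and argue that the requirement that no intermediate state be reached from $o$ after reading $w$ followed by a proper prefix of $v$ forces $v$ to be a suffix of the label of the single last transition on this path; since that label has length at most $k$, so does $v$.

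Finally, for (\ref{compl_i5}), I would observe that every orbit of $A$ survives unchanged in $B$ with the same internal transitions, gates, and out-transitions, since no new transition targets a state of $A$; transversality therefore transfers wholesale from $A$ to $B$. The new state $s_A$ forms a trivial orbit (no self-loop) imposing no recursive obligation in the \textrm{BW}-test, so $B$ inherits the orbit property and passes the test whenever $A$ does. The most delicate point I anticipate is the $k$-block claim in (\ref{compl_i4}): one must ensure that the residue $v$ returned by $\Delta(o, w)$ never straddles more than a single transition of $A$; the remaining items follow from the preceding lemmas and the shape of the completion.
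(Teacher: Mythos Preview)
Your argument is correct and tracks the paper's proof closely. One wording slip in item~(\ref{compl_i5}): you justify the preservation of orbits by saying ``no new transition targets a state of $A$,'' but the added transitions $(s_A,u,r)$ \emph{do} target states $r\in Q_A$; what you need (and clearly have in mind, given the rest of the paragraph) is that no new transition \emph{originates} from a state of $A$, so gates and out-transitions of every orbit of $A$ are unchanged, while $s_A$ receives no incoming edge and hence forms its own trivial orbit. For item~(\ref{compl_i3}) you read prefix-freeness directly off the defining clause of $\Delta$, which is slightly slicker than the paper's route: the paper first isolates the unique state $s$ and residual word $w_s$ with $w=w_pw_s$, $(o,w_p,s)\in\delta_A^*$, and $(s,w_su,r)\in\delta_A$ for every $(u,r)\in\Delta(o,w)$, and then appeals to determinism of $A$ at $s$; that same structural observation is exactly your ``last transition'' argument for~(\ref{compl_i4}).
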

\begin{proof}
	(\ref{compl_i1}): Let $q_A$ be the state added in $B$. Following Lemma~\ref{lm:state_eq_comp}, $q_A$ is equivalent to $q$.
	Moreover, the right language of every state of $A$ is the same in $B$ and the initial state has not changed.
    Since $q$ belongs to $\Phi(Q_A)$, the families of right languages of $A$ and $B$ are the same.    

    (\ref{compl_i2}): Since $F_B = F_A$ and the structure of the states of $A$ is preserved, the set of final labels is the same for $A$ and $B$.

	(\ref{compl_i3}): Let $p$ be a state of $O$ and $w$ be a word of $\Sigma^*$ such that $(\Phi_A(p), w, q)$ belongs to $\delta_M^*$.
    Since $q$ belongs to $\CQ{O}$ and $O$ is maximal, there is no state in $A$ equivalent to $q$ that can be reached from a state of $O$.   
    Thus, $\Delta_A(p, w) \neq \{(\varepsilon, r)\}$.
              
    Since $A$ is deterministic, there exists exactly one state $s$ in $A$ and two words $w_p$, $w_s$ in $\Sigma^*$ such that $w = w_pw_s$, $(p, w_p, s)$ belongs to $\delta_A^*$ and for any $(u, q)$ in $\Delta_A(p, w)$, $(s, w_su, q)$ belongs to $\delta_A$.
    Thus, for any distinct $(u_1, q_1)$, $(u_2, q_2)$ in $\Delta_A(p, w)$, $(s, w_su_1, q_1)$ and $(s, w_su_2, q_2)$ belongs to $\delta_A$, and $w_su_1$ and $w_su_2$ are not prefix from each other.
    Thus $u_1$ and $u_2$ are not prefix from each other, which means that $B$ is deterministic.
    
    (\ref{compl_i4}): If $A$ is $k$-block, then for any state $p$ of $A$, for any word $w$ in $\Sigma^*$ and for any $(u, q)$ in $\Delta_A(p,w)$, we have $|u| < k$.
    
    (\ref{compl_i5}): Since there is no transition to $q_A$, it is a trivial orbit.
    As we do not change the orbital structure, if $A$ passes the \textrm{BW}-test, then so does $B$.
\end{proof}

\begin{example}
    Let the automaton $M$ in Figure~\ref{fig:ExAlgoM} be the minimal DFA of the block automaton $A'$ in Figure~\ref{fig:ExAlgoA'}.
    The orbit $O = \{1,3,4\}$ of $A'$ is maximal, $\CQ{O} = \{2, 5, 6\}$ and $\CQ{O} \cap \Phi(Q_{A'})= \{2, 5\}$.
    The block automaton in Figure~\ref{fig:ExAlgoCplA'} is a completion of $A'$ with respect to $2$ and $5$.
    
    The state $1$ (resp. $4$) in $A'$ is equivalent to the state $1$ (resp. $4$) in $M$, and $(1, a, 2)$ and $(4, a, 5)$ both belong to $\delta_M^*$.
    Thus, we compute the transitions of $2_{A'}$ and $5_{A'}$ from $\Delta_{A'}(1,a) = \{(a, 3), (b, 4)\}$ and $\Delta_{A'}(4,a) = \{(aa, 1), (ab, 1), (b, f_1), (ac, f_1)\}$.
\end{example}

\begin{figure}[H]
    \centering
    \begin{tikzpicture}
		\node[state, initial] (i) {$i_M$};
		\node[state, right of=i] (1) {$1$};
		\node[state, below of=1] (2) {$2$};
		\node[state, right of=2] (3) {$3$};
		\node[state, right of=3] (4) {$4$};
		\node[state, right of=4] (5) {$5$};
		\node[state, above of=5] (6) {$6$};
		\node[state, accepting, right of=6] (f) {$f$};
		\path[->]
			(i) edge node {$a, b$} (1)
			(1) edge [swap] node {$a$} (2)
			(2) edge [swap] node {$a, b$} (3)
			(3) edge [swap] node {$a$} (4)
			(4) edge [swap] node {$a$} (5)
			(5) edge [swap] node {$a$} (6)
			(5) edge [swap] node {$b$} (f)
			(6) edge [swap] node {$a, b$} (1)
			(6) edge node {$c$} (f)
			(f) edge [loop above] node {$a$} ()
		;
	\end{tikzpicture}
	\caption{The minimal DFA $M$ of $A'$}
    \label{fig:ExAlgoM}
\end{figure}
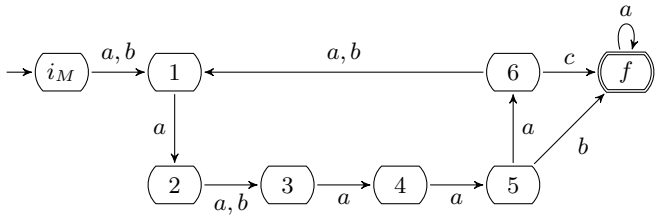

\begin{figure}[H]
    \centering
    \begin{tikzpicture}
		\node[state, initial] (i) {$i_A$};
		\node[state, above right of=i] (1) {$1$};
		\node[state, right of=1] (3) {$3$};
		\node[state, right of=3] (4) {$4$};
		\node[state, above of=4] (5A) {$5_{A'}$};
		\node[state, below of=3] (2A) {$2_{A'}$};
		\node[state, below right of=i] (5) {$5$};
		\node[state, below of=5] (2) {$2$};
		\node[state, right of=2] (3'') {$3''$};
		\node[state, accepting, right of=4, xshift=0.5cm] (f1) {$f_1$};
		\node[state, accepting, right of=f1] (f2) {$f_2$};		
		\path[->]
			(i) edge node {$a$} (1)
			(i) edge [swap, bend right=45] node {$ba$} (2)
			(1) edge [swap] node {$aa, ab$} (3)
            (3) edge [swap] node {$a$} (4)
            (4) edge [swap, bend right=25] node {$aab, aaa$} (1)
			(4) edge node {$ab, aac$} (f1)
			(2A) edge [swap] node {$a, b$} (3)
			(5A) edge [swap, bend right=25] node {$aa, ab$} (1)
			(5A) edge [bend left=25] node {$b, ac$} (f1)
			(2) [swap] edge node {$a, b$} (3'')
			(3'') edge node {$aa$} (5)
			(5) edge node {$aaa$} (2)
			(5) edge [swap] node {$ab$} (1)
			(5) edge [bend right=25] node {$b, ac$} (f1)
			(f1) edge [swap] node {$a$} (f2)
			(f2) edge [loop above] node {$a$} ()
		;
	\end{tikzpicture}
	\caption{A completion of the maximal orbit $\{1, 3, 4\}$ of $A'$}
    \label{fig:ExAlgoCplA'}
\end{figure}
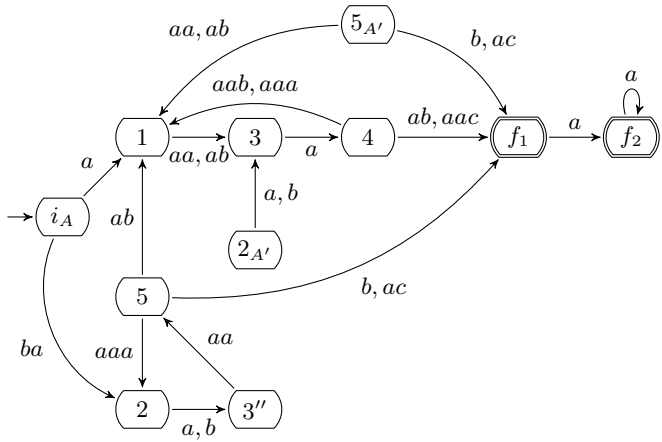

\subsubsection{The Slimming Procedure}\ 

    The slimming procedure consists in computing a $\mathcal{L}$-equivalent block automaton from a selection of maximal orbits.

    Let $M$ be the minimal DFA of a residual deterministic block automaton $A$.
    Let $(K_1, K_2, \ldots, K_l)$ be the set $\{\Omega(O) \mid O \text{ is an orbit of } A\}$, partially ordered with respect to reachability such that $K_i$ cannot reach $K_{i+j}$.
	Notice that $K_l$ contains the initial state of $M$ since $M$ is trim.
	
    The slimming procedure consists in the following steps:
    
\begin{enumerate}    
    \item Selecting $(O_1, O_2, \ldots, O_l)$ in $A$ such that $\Omega(O_j) = K_j$ and $O_j$ is maximal.
    \item Computing a block automaton $\textrm{Comp}(A)$ by completing every orbit $O_i$ of $A$ with the states of $Q_i = \CQ{O_i} \cap \Phi(Q_A)$.
        Then $C_i = O_i \cup Q_i$ is a \emph{completed orbit}.
    \item Computing a block automaton $\textrm{Link}(A)$ by linking every completed orbit of $\textrm{Comp}(A)$ together while preserving the transversality.
	\item Computing a block automaton $\textrm{Slim}(A)$ from $\textrm{Link}(A)$ by shifting the initial state to one in $C_l$ equivalent to $i_A$ and keeping only the states of the completed orbits.
\end{enumerate}
    
\begin{example}
    The automaton $A'$ in Figure~\ref{fig:ExAlgoA'} has only compact orbits.
    The automaton $M$ in Figure~\ref{fig:ExAlgoM} is its minimal DFA, and has three orbits: $K_1 = \{f\}$, $K_2 = \{1,2,3,4,5,6\}$ and $K_3 = \{i_M\}$.
    First, we select the following maximal orbits in $A'$: $O_1 = \{f_2\}$, $O_2 = \{1,3,4\}$ and $O_3 = \{i_A\}$.
    We have $\CQ{O_1} = \CQ{O_3} = \emptyset$ and $\CQ{O_2} \cap \Phi(Q_{A'}) = \{2, 5\}$.
    Thus, a completion of the orbits of $A'$ is presented in Figure~\ref{fig:ExAlgoCplA'}.
    Finally, an automaton $\textrm{Slim}(A')$ is presented in Figure~\ref{fig:ExAlgoCompA}.
\end{example}

\begin{figure}[H]
    \centering
    \begin{tikzpicture}
		\node[state, initial] (i) {$i_A$};
		\node[state, above right of=i] (1) {$1$};
		\node[state, right of=1] (3) {$3$};
		\node[state, right of=3] (4) {$4$};
		\node[state, above of=4] (5A) {$5_{A'}$};
		\node[state, below of=3] (2A) {$2_{A'}$};
		\node[state, accepting, right of=4, xshift=0.5cm] (f2) {$f_2$};
		\path[->]
			(i) edge node {$a$} (1)
			(i) edge [swap] node {$ba$} (2A)
			(1) edge [swap] node {$aa, ab$} (3)
            (3) edge [swap] node {$a$} (4)
            (4) edge [swap, bend right=25] node {$aab, aaa$} (1)
			(4) edge node {$ab, aac$} (f2)
			(2A) edge [swap] node {$a, b$} (3)
			(5A) edge [swap, bend right=25] node {$aa, ab$} (1)
			(5A) edge [bend left=25] node {$b, ac$} (f1)
			(f2) edge [loop above] node {$a$} ()
		;
	\end{tikzpicture}
	\caption{A slim of $A'$}
    \label{fig:ExAlgoCompA}
\end{figure}
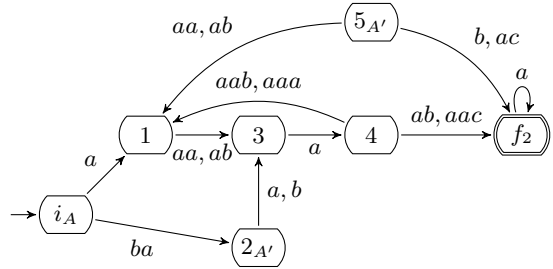

    Now, let us study the properties preserved by slimming a block automaton:   

\begin{proposition}\label{prop:slim_general}
	Let $A$ be a residual deterministic $k$-block automaton which passes the \textrm{BW}-test, and $B$ be a slim of $A$.
	Then
    \begin{enumerate}
        \item\label{slim_i1} $B$ is deterministic
        \item\label{slim_i2} $B$ is $k$-block
        \item\label{slim_i3} $B$ passes the \textrm{BW}-test
        \item\label{slim_i4} $B$ is $\mathcal{L}$-equivalent to $A$ and is residual
        \item\label{slim_i5} $\FBA{B}$ is a subset of $\FBA{A}$
        \item\label{slim_i6} if every orbit of $A$ is compact, then $B$ is compact.
    \end{enumerate}    	
\end{proposition}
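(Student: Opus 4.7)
The plan is to verify the six claims by tracking them through the three substantive stages of the slim construction: the completions that yield $\textrm{Comp}(A)$, the linking that yields $\textrm{Link}(A)$, and the restriction that yields $\textrm{Slim}(A)$. At each stage I would invoke the results already established, chiefly Proposition~\ref{prop:completion} for completion, Proposition~\ref{prop:lorb} for right-language equivalence inside transverse components, and Corollary~\ref{coro:BWt_orbAut} for the inductive character of the \textrm{BW}-test.

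For the completion stage I would apply Proposition~\ref{prop:completion} iteratively, once for each state $q \in \CQ{O_j} \cap \Phi(Q_A)$ of each selected maximal orbit $O_j$. This yields $\textrm{Comp}(A)$ as a residual deterministic $k$-block automaton that is $\mathcal{L}$-equivalent to $A$, has the same set of final labels, and still passes the \textrm{BW}-test. The orbital structure is only marginally affected: every added state forms a trivial orbit, so each completed orbit $C_j = O_j \cup Q_j$ still has $\Omega$-image $K_j$; moreover, if every orbit of $A$ is compact then each $C_j$ is compact, since by maximality of $O_j$ the added bridges are equivalent only to states of $M$ not already represented inside $O_j$.

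The linking stage is the main obstacle. The task is to remove from each $C_j$ the out-transitions whose targets are no longer in the selected union $\bigcup_i C_i$ and redirect them to equivalent states inside some $C_i$ with $i < j$. The key point is that the target of every such out-transition lies in an orbit $O'$ with $\Omega(O') = K_i$, so by construction of $\textrm{Comp}$, the completed orbit $C_i$ already contains an equivalent state. Redirecting to that state preserves the right language of the source, which yields $\mathcal{L}$-equivalence and, by Proposition~\ref{prop:lorb} applied inside each $C_j$, preserves transversality of each $C_j$ as a non re-entering component of $\textrm{Link}(A)$. Determinism, the $k$-block property, and the inclusion $\FBA{\textrm{Link}(A)} \subseteq \FBA{A}$ are immediate since no new labels and no new final states are introduced; the internal structure of each $C_j$ is unchanged, so the recursive \textrm{BW}-test on the orbital automata goes through exactly as for $\textrm{Comp}(A)$.

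The restriction stage shifts the initial state to a state of $C_l$ equivalent to $i_A$, which exists because $K_l \ni i_M = \Phi(i_A)$ and $O_l$ is maximal, and then discards every state outside $\bigcup_j C_j$; these discarded states are now unreachable, so residuality, determinism, the $k$-block property, the \textrm{BW}-test, $\mathcal{L}$-equivalence with $A$, and the inclusion of final labels all descend unchanged from $\textrm{Link}(A)$ to $\textrm{Slim}(A) = B$, giving claims (\ref{slim_i1})--(\ref{slim_i5}). For claim (\ref{slim_i6}), if every orbit of $A$ is compact then each $C_j$ is compact by the argument above; since the $K_j$ are pairwise distinct, a state of $C_j$ and a state of $C_{j'}$ with $j \neq j'$ project under $\Phi$ to distinct states of the minimal DFA $M$ and are therefore non-equivalent, so $B$ has no two equivalent states and is compact.
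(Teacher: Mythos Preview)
Your proposal is correct and follows essentially the same three-stage decomposition as the paper's own proof: apply Proposition~\ref{prop:completion} iteratively to obtain $\textrm{Comp}(A)$, argue that coherent redirection of out-transitions to equivalent targets preserves determinism, $k$-block, right languages, the orbit property and hence the \textrm{BW}-test for $\textrm{Link}(A)$, and then observe that the final restriction only removes now-inaccessible states. The one unnecessary ingredient is your appeal to Proposition~\ref{prop:lorb} for transversality in the linking stage; transversality is preserved there simply because the redirection is done uniformly for all gates of each completed orbit (this is part of the specification of $\textrm{Link}$), not via that proposition.
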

\begin{proof}
    Let $\textrm{Comp}(A)$ and $\textrm{Link}(A)$ be the automata used to compute $B$.
    
    Following Proposition~\ref{prop:completion}, $\textrm{Comp}(A)$ is deterministic, $k$-block, $\mathcal{L}$-equivalent to $A$ and passes the \textrm{BW}-test.
    Moreover, the set of final labels is preserved, and since $A$ is residual, so is $\textrm{Comp}(A)$.

    Following the description of the completion of an orbit, every completed orbit constitute a non re-entering component.
    To link the completed orbits together, the out-transitions are redirected to an equivalent state.
    By construction, for any completed orbit $C_i$ in $\textrm{Comp}(A)$ and for any out-transition $(o, b, q)$ of $C_i$, there exists exactly one $C_{j<i}$ which contains at least one state equivalent to $q$.
    Thus the structure of every orbit of $\textrm{Comp}(A)$ and their transversality can be preserved in $\textrm{Link}(A)$.
    Since $\textrm{Comp}(A)$ passes the \textrm{BW}-test, all of its orbital automata pass the \textrm{BW}-test and it has the orbit property.
    Thus, $\textrm{Link}(A)$ also passes the \textrm{BW}-test.
    Moreover, redirecting the transitions to equivalent states preserves the right languages, the determinism, the $k$-block, and the set of final labels.    
    Thus, doing it one completed maximal orbit after the other, whatever the order is, we always get a $\mathcal{L}$-equivalent block automaton at each step.
    Since $\textrm{Comp}(A)$ is residual, so is $\textrm{Link}(A)$.
     
    Lastly, shifting the initial state makes the states that do not belong to a completed orbit non-accessible.
    Thus, removing these states preserve the determinism (\ref{slim_i1}), the $k$-block (\ref{slim_i2}), the \textrm{BW}-test (\ref{slim_i3}) and make $\FBA{B}$ a subset of $\FBA{\textrm{Link}(A)} = \FBA{A}$(\ref{slim_i5}).
    Moreover, the right language of any state in the completed orbits remains the same, and for any state of $\textrm{Link}(A)$, there exists an equivalent one in a completed orbit.
    Therefore $B$ is $\mathcal{L}$-equivalent to $\textrm{Link}(A)$ and thus to $A$, and since $\textrm{Link}(A)$ is residual, so is $B$ (\ref{slim_i4}).
         
    For any distinct selected orbits $O_i$ and $O_j$ in $A$, $\Omega(O_i) \neq \Omega(O_j)$.
    It means that any two states of $A$ belonging to distinct selected orbits cannot be equivalent.
    Thus, if every orbit of $A$ is compact, the selected orbits of $A$ are also compact in $B$. 
    Moreover, since every selected orbit are completed with states that have no equivalent in them, any two distinct states of $B$ are not equivalent (\ref{slim_i6}).
\end{proof}

\subsection{The compaction procedure}\label{ss:compaction}

    We define a procedure of compaction which preserves the determinism, the $k$-block and the \textrm{BW}-test.
    The resulting automaton is called \emph{a compacted} of the original one.
    
    Let $A$ be a residual, deterministic $k$-block automaton which passes the \textrm{BW}-test.
    We proceed as follows:
    
\begin{itemize}
    \item 
    if every orbit of $A$ is compact, a slim of $A$ is returned.

    \item 
    Otherwise, there exists a non-compact orbit $O$ of $A$ and a pair $(b,s)$ in the set of consistency of $O$.
    Then, the $(b, s)$-cut of the orbital automaton of $s$ is compacted itself.
    The previous synchronizing transition is put back on this compacted automaton to its initial state, to then substitute it for $O$ in $A$.
    Finally, the resulting automaton is compacted itself.
\end{itemize}

\begin{proposition}\label{prop:comp_general}
	Let $A$ be a residual deterministic $k$-block automaton which passes the \textrm{BW}-test, and $D$ be a compacted of $A$.
	Then
    \begin{enumerate}
        \item $D$ is deterministic
        \item $D$ is $k$-block
        \item $D$ passes the \textrm{BW}-test
        \item $D$ is $\mathcal{L}$-equivalent to $A$ and is residual
        \item $\FBA{D}$ is a subset of $\FBA{A}$
        \item $D$ is compact.
    \end{enumerate}    	
\end{proposition}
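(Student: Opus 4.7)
The plan is to proceed by strong induction on a well-founded termination measure for the recursive procedure. A natural choice is the pair $(n,m)$ ordered lexicographically, where $n$ is the number of non-compact orbits of the input automaton and $m$ is the total number of transitions; the base case handles $n=0$ via slimming, and each recursive call strictly decreases this measure. I would first verify that all hypotheses of the recursive calls (residuality, determinism, $k$-block, BW-test) are preserved, so that the induction hypothesis applies at every recursive invocation.

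For the base case ($n=0$, all orbits of $A$ are compact), the procedure returns a slim of $A$, and the six conclusions follow directly from Proposition~\ref{prop:slim_general} (noting that residuality of $A$ is preserved throughout slimming, as argued in that proof).

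For the inductive step, let $O$ be a non-compact orbit of $A$, let $s$ be a state of $O$, and let $(b,s) \in \cset{\AutOrb{A}{s}}$. First I would apply Corollary~\ref{coro:BWt_orbAut} to conclude that $\AutOrb{A}{s}$ passes the BW-test; it is clearly deterministic, $k$-block, and residual. Then Corollary~\ref{coro:BWt_cut} yields that the $(b,s)$-cut $\AutCut{(\AutOrb{A}{s})}{(b,s)}$ still passes the BW-test. Since this automaton has strictly fewer transitions than $A$, the induction hypothesis produces a compacted $D'$ satisfying all six conclusions relative to $\AutCut{(\AutOrb{A}{s})}{(b,s)}$. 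Applying Proposition~\ref{prop:vadd_general} to $\AutOrb{A}{s}$ and $D'$ (with $s'$ the state of $D'$ equivalent to $s$) yields $\AutAdd{D'}{(b,s')}$ that is deterministic, $k$-block, $\mathcal{L}$-equivalent to $\AutOrb{A}{s}$, compact, still passes the BW-test, and has $\FBA{\cdot} \subseteq \FBA{\AutOrb{A}{s}} \subseteq \FBA{A}$. Substituting this compact block automaton for $\OrbAs{}{s}$ in $A$, Proposition~\ref{prop:prop_subst} then provides an intermediate automaton $A'$ that is deterministic, $k$-block, $\mathcal{L}$-equivalent to $A$, passes the BW-test, has $\FBA{A'} \subseteq \FBA{A}$, and in which the substituted orbit is now compact. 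Thus $A'$ has strictly fewer non-compact orbits than $A$ while all other hypotheses are preserved; the induction hypothesis applied to $A'$ yields the required compacted $D$, whose six properties chain through the transitivity of $\mathcal{L}$-equivalence and of the subset relation on $\FBA{\cdot}$.

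The main obstacle I anticipate is the termination and well-foundedness of the measure: the two recursive calls operate on structurally different objects (an orbital sub-automaton versus the substituted whole), so one must confirm that either the number of transitions strictly drops in the first call, or the number of non-compact orbits strictly drops in the second call, and that residuality is preserved by substitution so that the induction hypothesis is applicable at the second call. The preservation of residuality in the substituted automaton follows from the $\mathcal{L}$-equivalence asserted in Proposition~\ref{prop:prop_subst}(\ref{subst_i4}), combined with the fact that the family of right languages of any compacted intermediate is a subset of that of the corresponding minimal DFA, which is exactly the condition of residuality needed to recurse.
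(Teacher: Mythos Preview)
Your approach is essentially the paper's: base case via Proposition~\ref{prop:slim_general}, inductive step via the chain Corollary~\ref{coro:BWt_orbAut} $\to$ Corollary~\ref{coro:BWt_cut} $\to$ recursive call $\to$ Proposition~\ref{prop:vadd_general} $\to$ Proposition~\ref{prop:prop_subst} $\to$ recursive call, with residuality of the intermediate automata recovered from $\mathcal{L}$-equivalence (for $A'$) and from trimness via Lemma~\ref{lm:state_kbd} (for the orbital automaton and its cut). The paper also notes explicitly that the set of consistency of a non-compact orbit is non-empty because $A$ passes the BW-test; you should make that explicit when you pick $(b,s)$.

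The one concrete issue is your termination measure. The lexicographic pair $(n,m)$ does \emph{not} decrease on the first recursive call in general: that call is on $C=\AutCut{(\AutOrb{A}{s})}{(b,s)}$, which indeed has strictly fewer transitions than $A$, but cutting a synchronizing transition can shatter the single orbit $O$ into several sub-orbits, and nothing prevents more than one of them from being non-compact. So $n$ may increase while $m$ decreases, and $(n',m')\not<_{\mathrm{lex}}(n,m)$. Your stated check (``transitions drop in the first call, non-compact orbits drop in the second'') is therefore not sufficient for the order you chose. The paper handles this by decoupling termination from correctness: termination is argued separately (right after the proof) from the fact that the procedure mirrors the BW-test and eventually reaches acyclic automata; correctness is then proved by invoking an inner induction on the number of transitions for the call on $C$ (which is strictly smaller in that measure alone), and an outer induction on the number of non-compact orbits for the call on $A'$. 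If you want a single well-founded scheme, either induct on recursion depth after establishing termination, or run the two inductions as genuinely nested rather than as a lexicographic product.
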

\begin{proof}
    If every orbit of $A$ is compact, then $D$ is a slim of $A$ and following Proposition~\ref{prop:slim_general}, $D$ is compact, $\mathcal{L}$-equivalent to $A$, residual, deterministic, $k$-block, passes the \textrm{BW}-test and its set of final labels is included in the one of $A$.

    Otherwise, there exists an orbit $O$ of $A$ which is not compact.
    Thus, it is not trivial and since $A$ passes the \textrm{BW}-test, the set of consistency of $O$ is not empty.
    Let $(b,s)$ be a pair in the set of consistency of $O$ and $B$ be the orbit automaton of $s$ in $A$.
    Since $B$ is an orbital automaton from $A$, it is trim, deterministic and $k$-block.
    Moreover, following Corollary~\ref{coro:BWt_orbAut}, since $A$ passes the \textrm{BW}-test, so does $B$.
    Let $C$ be the $(b,s)$-cut of $B$.
    Since $B$ is deterministic and $k$-block, so is $C$.
    Moreover, since $B$ is trim and has $s$ as its initial state, $C$ is also trim (and thus residual).
    Following Corollary~\ref{coro:BWt_cut}, since $B$ passes the \textrm{BW}-test, so does $C$.
    Let $C'$ be a compacted of $C$.
    Since $C$ is structurally smaller than $A$ (with at least one less transition), by induction on its size, $C'$ is compact, $\mathcal{L}$-equivalent to $C$, residual, deterministic, $k$-block, passes the \textrm{BW}-test, and its set of final labels is included in the one of $C$.
    Let $B'$ be the $(b, i_{C'})$-add of $C'$.
    Following Proposition~\ref{prop:vadd_general}, $B'$ is compact, $\mathcal{L}$-equivalent to $B$, residual, deterministic, $k$-block, passes the \textrm{BW}-test, and its set of final labels is included in the one of $B$.
    Let $A'$ be the substituted of $B'$ for $O$ in $A$.
    Following Proposition~\ref{prop:prop_subst}, $A'$ is $\mathcal{L}$-equivalent to $A$, residual, deterministic, $k$-block, passes the \textrm{BW}-test, and its set of final labels is included in the one of $A$.
    Moreover, $A'$ has one less non-trivial orbit than $A$.
    Then $D$ is a compacted of $A'$ and, by induction on the number of non-compact orbits of $A'$ compared to $A$, $D$ is compact, $\mathcal{L}$-equivalent to $A'$ (and thus to $A$), residual, deterministic, $k$-block, passes the \textrm{BW}-test and its set of final labels is included in the one of $A'$ (and thus of $A$).
\end{proof}

    Notice that the compaction procedure is based on the description of the \textrm{BW}-test, which always terminates by reaching acyclic automata through recursively eliminating transitions of orbital automata.
    In the case of the compaction procedure, this means reaching automata with only trivial orbits which are necessarily compact, implying the end of the recursive calls.
    Thus, the compaction procedure always halts.
    
\begin{example}
    Let us sum up our running example.
    The automaton $A$ (Figure~\ref{fig:ExAlgoA}) is $3$-block, deterministic and passes the \textrm{BW}-test.
    It is not compact and the orbit $O = \{1, 3, 3', 4\}$ is also not compact.
    The set of consistency of $O$ is $\{(aaa, 1), (aab, 1)\}$.
    Thus, we get the automaton $B = \AutOrb{A}{1}$ (Figure~\ref{fig:ExAlgoB}).
    Here, since the two synchronizing transitions go to the same state, both of them can be removed to get the automaton $C = \AutCut{B}{\{(aab,1), (aaa,1)\}}$ (Figure~\ref{fig:ExAlgoC}).
    Since it is acyclic, it has only compact orbits.
    Thus, $C'$ (Figure~\ref{fig:ExAlgoC'}) can be computed as a compacted and a slim of $C$.
    Then, $B'$ (Figure~\ref{fig:ExAlgoB'}) is obtained by putting back the synchronizing transitions of $B$, and $A'$ (Figure~\ref{fig:ExAlgoA'}) is the substitution of $B'$ for $\OrbAs{}{1}$ in $A$.
    Since $A'$ only has compact orbits, a compacted of $A$ can be computed (Figure~\ref{fig:ExAlgoCompA}).
\end{example}

    Following Theorem~\ref{th:KBD}, if a language is $k$-block deterministic, it is recognized by a deterministic $k$-block automaton which passes the \textrm{BW}-test.
    Since trimming an automaton preserves the \textrm{BW}-test, and since a trim deterministic block automaton is residual, by applying the compaction procedure, we can conclude that:

\begin{theorem}\label{thm:CKBD}
	A language is $k$-block deterministic if and only if it is recognized by a trim deterministic $k$-block automaton which is compact and passes the \textrm{BW}-test.
\end{theorem}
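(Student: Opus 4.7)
The plan is to prove the two directions of the equivalence separately, with the non-trivial work lying in the forward direction where we need to upgrade an arbitrary witness into one that is simultaneously trim and compact.

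For the backward direction, suppose $L$ is recognized by a trim, compact, deterministic $k$-block automaton $B$ that passes the BW-test. Treating each block as a single symbol (and noting that determinism of a block automaton forces distinct blocks out of each state, since prefix-comparable blocks out of one state would include the case of equal blocks), $B$ is exactly the block form of the alphabetic image of an ordinary DFA that passes the (purely structural) BW-test. Hence Theorem~\ref{th:KBD} applies directly, giving that $L$ is $k$-block deterministic. The additional hypotheses of trimness and compactness are only strengthenings of what Theorem~\ref{th:KBD} already requires.

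For the forward direction, assume $L$ is $k$-block deterministic. By Theorem~\ref{th:KBD} there exists a deterministic $k$-block automaton $A_0$ recognizing $L$ whose alphabetic image is a DFA passing the BW-test; equivalently, $A_0$ itself passes the BW-test. First I would trim $A_0$ by removing non-accessible and non-co-accessible states, yielding a trim deterministic $k$-block automaton $A$ recognizing $L$; this preserves the BW-test, as noted immediately after the description of the test. Being trim, $A$ is residual: by Lemma~\ref{lm:state_kbd} every state of $A$ has the same right language as some state of the minimal DFA of $L$, so $\mathcal{L}(A) \subset \mathcal{L}(M)$.

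Then I would feed $A$ into the compaction procedure of Section~\ref{ss:compaction}. By Proposition~\ref{prop:comp_general}, the output $D$ is deterministic, $k$-block, compact, residual, $\mathcal{L}$-equivalent to $A$ (hence recognizes $L$), has $\FBA{D} \subset \FBA{A}$, and passes the BW-test. The last cosmetic step is to trim $D$ if the slimming stage has left any non-accessible state; trimming again preserves determinism, the $k$-block property, compactness (removing states cannot create equivalent pairs), residuality, and the BW-test. This yields the required trim, compact, deterministic $k$-block automaton passing the BW-test and recognizing $L$.

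The genuinely hard work is not in this final assembly but in Proposition~\ref{prop:comp_general} itself, whose proof rests on Propositions~\ref{prop:prop_subst}, \ref{prop:vadd_general}, \ref{prop:completion} and \ref{prop:slim_general}; once those are in place, the present theorem is a short corollary, and the only thing to double-check is that trimming at the very end indeed preserves compactness and the BW-test, which is immediate since removing unreachable states neither creates equivalent pairs nor alters the orbit structure used by the test.
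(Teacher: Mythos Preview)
Your proposal is correct and follows essentially the same route as the paper: invoke Theorem~\ref{th:KBD} for both directions, then for the forward direction trim the witness (which preserves the BW-test and makes it residual via Lemma~\ref{lm:state_kbd}) and apply the compaction procedure of Proposition~\ref{prop:comp_general}. Your extra final trimming step is a reasonable bit of care that the paper leaves implicit; as you note, it is harmless since removing unreachable states cannot create new equivalent pairs or disturb the orbit structure.
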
  

    Thus, following Theorem~\ref{thm:KBD_FIN}, applying the \textrm{BW}-test over each one leads us to state that
    
\begin{theorem}
    The $k$-block determinism of a language is decidable.
\end{theorem}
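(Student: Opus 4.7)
The plan is to combine the structural characterization of Theorem~\ref{thm:CKBD} with the effective enumeration guaranteed by Theorem~\ref{thm:KBD_FIN}, thereby reducing the decision problem to finitely many instances of the \textrm{BW}-test, which is itself a decidable structural procedure.

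First, I would fix an effective presentation of the input language $L$, say a regular expression or a DFA, and compute its minimal DFA $M$ via the standard minimization algorithm. From $M$, I would construct the $k$-transition automaton $T$ of $L$ as in the definition preceding Proposition~\ref{prop:tcKBD_KTA}: this is a finite and effective operation, since it only requires enumerating words of length at most $k$ over a finite alphabet and checking, along each such word, that no proper non-empty prefix is accepted from the source state in $M$. Next, I would enumerate every sub-automaton of $T$; this is a finite search, bounded by $2^{|Q_T|+|\delta_T|}$. For each candidate $B$, I would verify effectively that $B$ is trim, that it is a deterministic $k$-block automaton (checking prefix-freeness of the outgoing labels at each state), and that $\LA{B}=L$ (for instance by testing $\mathcal{L}$-equivalence between the underlying alphabetic images using standard DFA equivalence algorithms); by Proposition~\ref{prop:tcKBD_KTA}, the surviving automata are exactly the trim compact deterministic $k$-block automata recognizing $L$, up to isomorphism.

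Then, for each of these finitely many candidates $B$, I would run the \textrm{BW}-test described after Theorem~\ref{th:BWtest}. The paper already observes that this test always terminates, since each recursive call either halts immediately on a trivial orbit or strictly reduces the transition structure by cutting a synchronizing transition; hence each individual test is effective. By Theorem~\ref{thm:CKBD}, $L$ is $k$-block deterministic if and only if at least one of these enumerated automata passes the test, so I would simply output \emph{yes} the first time the test succeeds and \emph{no} once the finite list is exhausted.

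The only nontrivial point I expect is ensuring that all the verification steps on candidate automata (trimness, determinism in the block sense, language equivalence to $L$) are themselves decidable; but each is routine given that the candidates are finite block automata over a fixed finite alphabet and that $L$ is given by a finite device. Everything else is bookkeeping: the finite enumeration from Theorem~\ref{thm:KBD_FIN}, combined with the decidability of the \textrm{BW}-test and the characterization of Theorem~\ref{thm:CKBD}, yields the desired decision procedure.
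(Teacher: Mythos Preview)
Your proposal is correct and follows essentially the same approach as the paper: combine Theorem~\ref{thm:CKBD} with Theorem~\ref{thm:KBD_FIN} to reduce the problem to applying the \textrm{BW}-test on a finite, computable set of candidate automata. The paper's own argument is in fact just this one-line observation, so your more explicit account of the effectiveness of each intermediate check (minimization, construction of $T$, enumeration of sub-automata, trimness, block determinism, language equivalence, termination of the \textrm{BW}-test) is a faithful expansion rather than a different route.
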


\section{Conclusion and perspective}

    In this paper, we show that testing whether a language is $k$-block deterministic is decidable.
    First, we show that any $k$-block deterministic language is recognized by a compact deterministic $k$-block automaton which passes the \textrm{BW}-test.
    Furthermore, we show how to finitely generate all the compact deterministic $k$-block automata recognizing a given language and, as a direct consequence, that the $k$-block determinism is decidable.
    Notice that deciding whether there exists an integer $k$ such that a given regular language is $k$-block deterministic is still open.
    However, the generation of the successive $k$-transition automata define a semidecidable procedure.
    The next step of the study is to determine whether there exists an upper bound in order to decide when to halt this procedure.

\bibliography{biblio}
	
\end{document}